\theoremstyle{definition}
\newtheorem{theorem}{Theorem}[section]
\newtheorem{lemma}[theorem]{Lemma}
\theoremstyle{remark}
\title{Positive Geometries of S-matrix without Color} 
\author[a]{Mrunmay Jagadale}\emailAdd{mjagadal@caltech.edu}
\author[b]{Alok Laddha,}\emailAdd{aladdha@cmi.ac.in}
\affiliation[a]{Walter Burke Institute for Theoretical Physics,  California Institute of Technology, Pasadena, CA 91125, USA}
\affiliation[b]{Chennai Mathematical Institute, H1, SIPCOT IT Park, Siruseri, Kelambakkam 603103, India}
\abstract{
In this note, we prove that the realization of associahedron discovered by Arkani-Hamed, Bai, He, and Yun (ABHY) is a positive geometry for tree-level S-matrix of ordinary $\phi^{3}$ theory without color. More in detail, we consider diffeomorphic images of the ABHY associahedron. The diffeomorphisms are linear maps parametrized by the right cosets of the Dihedral group $D_{n}$. The set of all the boundaries associated with these copies of ABHY associahedron exhaust all the poles of $\phi^{3}$ theory. We prove that the sum over the diffeomorphic copies of ABHY associahedron is a positive geometry and the total volume obtained by summing over all the dual associahedra is proportional to the tree-level S matrix of (massive or massless) $\phi^{3}$ theory.  We then provide non-trivial evidence that the projection of the $\frac{n-4}{2}$ $d$-log forms (parametrized by the accordiohedron known as Stokes polytope) on these realizations of the associahedron lead to the tree-level amplitudes in $\phi^{4}$ theory without color. 

Our results build on ideas laid out in \cite{Jagadale:2021iab, Jagadale:2022rbl}, leading to further evidence that a large class of positive geometries which are diffeomorphic to the ABHY associahedron defines an ``amplituhedron" for  tree-level S matrix of \emph{some} local and unitary scalar theory.   An interesting offshoot of our analysis is the CHY formula for the tree-level amplitude in $\phi^{3}$ theory without color.  We also highlight a fundamental obstruction in applying these ideas to discover positive geometry for the un-colored  $\phi^{3}$ S-matrix integrand at one-loop.} 
\begin{document} 
\maketitle
\flushbottom

\section{Introduction}
The ``amplituhedron program" of the S-matrix \cite{Arkani-Hamed:2013jha}, \cite{Ferro_2021},\cite{Herrmann:2022nkh}(and references therein) has offered a number of remarkable insights in deepening our understanding of the analytic structure of the scattering amplitudes by geometrizing the analytic structure of the S-matrix. In the landscape of scalar field theories, these insights have led to the discovery of a fundamental postulate known as projective invariance, from which unitarity and locality emerge as a set of derived postulates.   

By recasting S-matrix as a differential form in the kinematic space ${\cal K}_{n}$, the amplituhedron program has consolidated our understanding of the  recursion relations first discovered in the context of MHV amplitudes in gauge theories by Andrew Hodges \cite{Hodges:2013aa} which was further developed by Arkani-Hamed, Bourjailly, Cachazo Hodges, and Trnka \cite{Arkani-Hamed:2012aa}.  It has enhanced our understanding of the worldsheet formulation of the S matrix encapsulated in the CHY (Cachazo, He, and Yuan)  formula by identifying  certain compactifications of the world-sheet moduli space with a specific Positive geometry known as associahedron in the kinematic space and finally, it has revealed a potentially deep connection between geometrization of the S-matrix in the kinematic space and the color-kinematics duality. The fundamental thesis behind these developments is the discovery of a class of polytopes known as  positive geometry whose boundaries capture \emph{all} the poles of the planar S matrix of scalar quantum field theories. The canonical forms associated with positive geometries are amplitudes in a QFT. 

Arkani-Hamed, Bai, He, and Yan discovered the first example of positive geometry (known as associahedron) directly  inside the kinematic space, \cite{Arkani-Hamed:2017mur}. The unique canonical form defined by the associahedron in ${\cal K}_{n}$ is the amplitude of bi-adjoint $\phi^{3}$ theory.  However, we now know that the associahedron polytope is the first member of an infinite family of positive geometries known as accordiohedra. Soon after the discovery of associahedron in the kinematic space, specific realizations of accordiohedra were also discovered, which defined  color-ordered S-matrix for scalars without derivative coupling, \cite{He:2018okq, Banerjee:2018tun, Raman:2019utu, Herderschee:2020lgb, Herderschee:2019wtl, Salvatori:2019phs, Kalyanapuram:2019nnf, Aneesh:2019cvt, Damgaard:2021ab, Barmeier:2021iyq, Cachazo:2022voc}.

One of the primary reasons why the discovery of positive geometries in kinematic space has led to a deeper understanding arises from parallel developments in the subject of quiver representation theory and cluster algebra. Namely, given  the set of all possible dissection of an $n$-gon (which may include puncture in the interior) with a fixed dimension, it generates, on the one hand, combinatorial geometries such as the accordiohedron and on the other hand, a family of vectors in an abstract Cartesian space such that these vectors form a simplicial and a complete fan, \cite{Padrol2019AssociahedraFF}. 
The polytopal realization of this fan is then a convex realization of the combinatorial polytope. 

In  the seminal paper \cite{Bazier-Matte:2018rat}, it was shown that in the case of associahedron, such a polytopal realization of the aforementioned fan  matches precisely the ABHY associahedron. This result was extended to include realizations arising from so-called cyclic quivers and generate convex realizations of accordiohedron in \cite{Padrol2019AssociahedraFF}.

Thus from combinatorics of abstract dissections and a single non-trivial principle known as projective invariance, one obtains S-matrix for an infinite family of scalar quantum field theories, clearly hinting at a possibility that the S-matrix of a local QFT may simply be volumes of certain polytopes where the measure is determined by a deeper principle such as projective invariance. 

As we argued in \cite{Jagadale:2021iab, Jagadale:2022rbl}, the geometric data in ABHY associahedron is even richer than what the seminal developments revealed. The new structures emerged from a simple observation that the Cartesian space in which quiver algebra shaped the polytopes such as accordiohedron and the physical kinematic space of Mandelstam invariants are a priori distinct vector spaces. The simplest identification between them generated color-ordered tree-level amplitudes of massless scalar QFTs.\footnote{Strictly speaking, this result is valid as long as the dimension of spacetime $D\, \geq\, n$ where $n$ is the number of external particles, as in $D\, =\, 4$ the kinematic space is a $3n-10$ dimensional variety in ${\cal K}_{n}$ on which the so-called Gram conditions are satisfied. \cite{Damgaard:2021aa},\cite{ukowski:2022aa}.} However, this identification is just one such map among the space of all diffeomorphisms between two copies of $P_{n}({\bf R})$. A rather simple class of diffeomorphisms, namely linear maps between $P^{n}({\bf R})$ to $P^{n}({\bf R})$ led to a deformation of  the ABHY associahedron in the kinematic space which turned out to be positive geometries for  theories far removed from massless scalar field theories.  This rather simple observation potentially opens up avenues to answer two outstanding questions in the program. 
\begin{enumerate}
\item What is the positive geometry for a manifestly crossing symmetric (as opposed to planar) tree-level S-matrix, such as the ordinary massive $\phi^{3}$ scalar field theory? 
\item Can the scattering forms associated with amplitudes with non-trivial numerator factors be derived from fundamental postulates like projectivity? 
\end{enumerate}
We answer the first question for $\phi^{3}$ and $\phi^{4}$ theories without color in this note. 

More in detail, we show that there exists a $\frac{(n-1)!}{2}$ linear maps $f_{[\sigma]}$ that are parametrized by the elements $[\sigma]$ of right cosets of the dihedral group ${\cal D}_{n}$.  $f_{[\sigma]}$  maps the ABHY associahedron in the Cartesian space to an associahedron inside the kinematic space. We will refer to the images as deformed realizations of the associahedron and prove that the sum of the canonical forms over all of these associahedra generates the complete (as opposed to color-ordered) S matrix of a $\phi^{3}$ theory with mass $m$. 

The organization of the paper is as follows. In section (\ref{bose}), we use the permutation symmetry to sub-divide the family of all the poles of a $\phi^{3}$ S-matrix in ``associahedron slices". Such a classification helps us in deriving a beautiful formula that relates the total number of Feynman diagrams with the cardinality of the vertex set of the associahedron, which is given by a Catalan number.  In section (\ref{review}), we briefly review the construction of ABHY associahedron and so-called generalized permutahedron, which have hitherto been considered as positive geometries whose vertices are associated with non-planar poles.  In section (\ref{pimek}), we define the linear maps between the embedding space and the kinematic space, which are parametrized by elements of the Bose symmetry and obtain deformed realizations of the ABHY associahedron. In section \ref{dfiks}, we pull back the canonical form from the cartesian space ${\cal E}_{n}$  to ${\cal K}_{n}$ and show how the sum overall such forms are proportional to crossing symmetric S-matrix of $\phi^{3}$ theory. 

In section \ref{chynp}, we use the linear maps between ${\cal E}_{n}$ and ${\cal K}_{n}$ to write the CHY formula for crossing symmetric $\phi^{3}$ S-matrix where the CHY integrand is the Parke-Taylor form. In section \ref{phi4}, we argue that a weighted sum over the projection of $\frac{n-4}{2}$ ranked $d\ln$ forms, which are labeled by quadrangulations of the $n$-gon is the tree-level scalar amplitude with quartic interactions. Once again, the amplitude is manifestly crossing symmetric as the scalar particles have no color. In section \ref{1loop}, we show how these rather simple ideas face a fundamental obstruction in relating the $\hat{D}_{n}$ polytope discovered in \cite{afpst} and one loop integrand of $\phi^{3}$ S matrix without color. We end with a discussion of our results and place them in the context of the entire program of geometrizing the S-matrix. 

\section{Combinatorics of poles of scattering amplitude in scalar theory}\label{bose}
The poles of $n$-point tree-level scattering amplitude in a local scalar field theory are of the form $\frac{1}{s_{I}}$, where $s_{I} =\left( \sum_{i \in I} p_{i}  \right)^{2}$, and $1< |I|<n-1$. There are $2^{n-1}- n-1$ such poles. However, not all poles can come in a single scattering channel of a unitary theory. We say two poles are compatible if they can come together in a single scattering channel of a unitary theory. For example, at 5-point, the poles $1/s_{12}$ and $1/s_{13}$ can never come together in a single scattering channel, but the poles $1/s_{13}$ and $1/s_{123}$ can come in a single scattering channel. This notion of compatibility of poles gives a combinatorial structure to the set of poles of $n$-point tree-level scattering amplitude in a local unitary scalar field theory. We will consider scattering amplitudes in $\phi^{3}$-theory where all poles that can appear do appear. 

When restricted to planar poles, the combinatorial structure mentioned above is the combinatorial associahedron. Starting with the combinatorial structure of an associahedron, we can write down algebraic relations between planar kinematic variables, viz. $X_{i,j} + X_{i+1, j+1}- X_{i,j+1}-X_{i+1,j}=c_{ij}$, which in turn gives a geometric realization of the associahedron in the kinematic space, viz. the ABHY associahedron. This ABHY associahedron gives us the scattering amplitude of planar $\phi^{3}$-theory. However, as we will see in this section, the underlying combinatorial structure is more universal than previously thought. This, in turn, tells us that the underlying algebraic and geometric structures are more universal and can capture scattering amplitudes of $\phi^{3}$-theory (without color), which we will see in the latter sections. 

The combinatorial structure on the set of all poles is such that this set can be neatly divided  into a family of associahedra intertwined. For example, at 4-point, the set of all poles is $\{1/s_{12},1/s_{23}, 1/s_{13}\}$. This set is made of three one-dimensional associahedra, viz. $\{ s_{12} , s_{23}\}$, $\{ s_{23} , s_{13}\}$, and $\{ s_{13} , s_{12}\}$. Here, each associahedron shares its poles with the other two. For higher-point amplitudes, the associahedra are highly intertwined. This poses the question, how to get all these different associahedra slices of the set of all poles? To understand this, we have to look at the action of permutation(Bose) symmetry, $S_{n}$, on the set of all poles of $n$-point scattering amplitudes. 

There is a natural action of Bose symmetry on the set of all poles. This action induces the action of Bose symmetry on scattering channels or Feynman diagrams. The notion of compatibility discussed above commutes with the action of Bose symmetry. Two poles of scattering amplitude are compatible if and only if their image under the action of any element $\sigma \in S_{n}$ are compatible. Therefore, the action of any element $\sigma \in S_{n}$ on the combinatorial associahedra formed by planar poles takes us to another associahedra slice of the set of all poles. Given any Feynman diagram, there exists a $\sigma \in S_{n}$ such that the action $\sigma$ on that Feynman diagram gives us a planar Feynman diagram. Therefore, by acting different elements of $S_{n}$ on the combinatorial associahedron of planar poles, we get all the ``associahedron slices'' of the set of all poles. 
\begin{figure}
    \centering
    \includegraphics[scale=0.37]{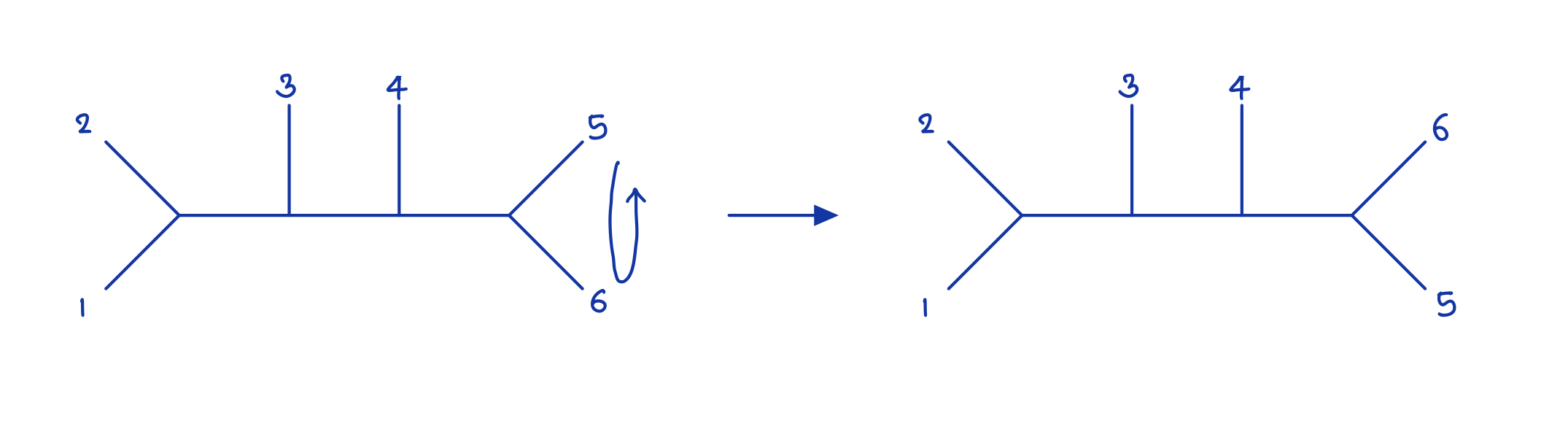}
    \caption{Bose symmetry that leaves the scattering channel invariant.}
    \label{Bosesymmstab}
\end{figure}

Generically the action of an element of $S_{n}$ on a planar scattering channel takes us to a non-planar scattering channel. However, there is a subgroup of $2^{n-2}$ elements generated by elements like the one depicted in figure \ref{Bosesymmstab} that give you back the same scattering channel.
More in detail, suppose $\sigma_{S}$ is an element of $S_{n}$ that takes a planar diagram $P$ to $S$ when acted by $\sigma_{S}$. That is, $\sigma_{S} \cdot P = S $. Then $\alpha \cdot \sigma_{S} \cdot P = S$ as well, where $\alpha$ is one of the $2^{n-2}$ elements that takes $S$ back to itself. At the same time, there are $n$ order $n$ elements of $S_{n}$, the cyclic permutations, that take you from one planar scattering channel to another planar scattering channel. Therefore, if $\sigma_{P,P^{\prime}} \in C_{n}$ is such that $\sigma_{P,P^{\prime}} \cdot P^{\prime} = P$, where $P^{\prime}$ is planar, then $\alpha \cdot \sigma \cdot \sigma_{P,P^{\prime}} $ takes $P^{\prime}$ to $S$, $ \alpha \cdot \sigma \cdot \sigma_{P,P^{\prime}} \cdot P^{\prime} =S$. In other words, for any scattering channel $S$, there are $ n 2^{n-2}$ elements of $S_{n}$ that take some planar diagram to $S$. This is beautifully captured by the formula relating the total number of distinguished Feynman diagrams with cubic vertices and the Catalan number, which is the number of planar Feynman diagrams.
\begin{equation}
    (2n-5)!! = \frac{C_{n-2} (n-1)! }{2^{n-2}},
\end{equation}
where $(2n-5)!!$ is the total number of $n$-point tri-valent Feynman diagrams and $C_{n-2}$ is the total number of $n$-point tri-valent planar Feynman diagrams.\footnote{This identity follows trivially from the definitions, $C_{n-2}\, =\, \frac{(2n-4)!}{(n-2)!(n-1)!}$, $(2n - 5)!!\, =\, \frac{(2n-4)!}{2^{n-2}(n-2)!}$.}

\section{ABHY associahedron and Zonotopal Generalised Permutahedron}\label{review}
Consider the Cartersian space ${\bf R}^{\frac{n(n-3)}{2}}$. We denote it as ${\cal E}_{n}$, so as to contrast it with the kinematic space of Mandelstam invariants ${\cal K}_{n}$.\footnote{To be completely precise, ${\cal E}_{n},\, {\cal K}_{n}$ are real projective spaces of dimension $\frac{n(n-3)}{2}$, but we gauge fix the projective freedom by choosing the first co-ordinate to be $1$.} We fix once and for all, a basis $\{x_{ij}\, \vert\,  1\, \leq\, i\, <\, j + 1\, \leq\, n\, \}$ in  ${\cal E}_{n}$. This basis is  labelled by chords of an n-gon :  $\{\, x_{ij}\, \vert\, 1\leq\, i\, < j + 1\, \leq\, n \}$.  Let $y_{i_{1}, \dots\, i_{k}}\, 1\, \leq\, i_{1}\, <\, \dots\, <\, i_{k}\, \leq\, n$ be defined as, 
\begin{align}\label{yijxij}
    y_{ij}\, &:=\, x_{i,j+1}\, +\, x_{i+1,j}\, -\, x_{ij}\, -\, x_{i+1,j+1} \endline
    y_{i_{1}i_{2}\, \dots\, i_{k}}\, &:=\, \sum_{m,n=1\vert\, m < n}^{k}\, y_{i_{m}i_{n}}
\end{align}
These two equations imply that,
\begin{flalign}
x_{ij}\, =\, y_{i,i+1,\dots,j-1}
\end{flalign}
The $(n-3)$-dimensional ABHY associahedron \cite{Arkani-Hamed:2017mur} is a specific convex realization of the associahedron $A_{n}$ in the positive quadrant of the embedding space, ${\cal E}_{n}^{\geq\, 0} :=\, \{\, x_{ij}\, \geq\, 0\, \forall\, (i,j)\, \}$, defined as follows \cite{Arkani-Hamed:2017mur, Bazier-Matte:2018rat,Padrol2019AssociahedraFF}; Given any reference triangulation $T$,  consider the intersection of the positive quadrant of the embedding space $\mathcal{E}_{n}^{\geq 0}$ and the hyper-planes given the equations,
\begin{flalign}\label{yijcij}
y_{ij} = - c_{ij}\, \hspace{0.1cm} \vert\, \hspace{0.1cm} c_{ij}\, >\, 0\, \hspace{0.1cm} \forall\, \hspace{0.1cm} (i,j)\, \notin\, T^{c}.
\end{flalign}
For any choice of positive constants $c_{ij}$, we get a polytopal realization of $A_{n}$. 

We will denote the ABHY realization by $A_{n}^{T}$ as it explicitly depends on the choice of the reference triangulation. Our review of ABHY realization is based on \cite{Padrol2019AssociahedraFF}, whereas  the original construction in \cite{Arkani-Hamed:2017mur} where these polytopes were discovered directly in the kinematic space of Mandelstam invariants ${\cal K}_{n}$. We now quickly review the relationship between the two. 

In space-time dimensions $d\, \geq\, n-3$, the kinematic space  ${\cal K}_{n}$ spanned by the Mandelstam variables can also be realized as a ${\bf R}^{\frac{n(n-3)}{2}}$ space.  This identification involves identifying a (complete) set of Mandelstam invariants with the coordinate basis. A rather convenient choice of co-ordinates is the set of planar kinematic variables $\{\, X_{ij}\, \vert\, 1\, \leq\, i\, <\, j + 1\, \leq\, n\, \}$. 
\begin{flalign}
X_{ij}\, =\, (p_{i}\, +\, \dots\, +\, p_{j-1})^{2}
\end{flalign}
In general, given any ordering $\{\, \sigma(1),\, \dots,\, \sigma(n)\, \}$ obtained by action of a permutation $\sigma\, \in\, S_{n}$ on the standard ordering, one can define the so-called $\sigma$-planar variables which are labelled by chords of an $n$-gon whose vertices are ordered in clockwise direction as $(\, \sigma(1),\, \dots,\, \sigma(n)\, )$.  Any such choice of $\sigma$-planar basis defines for us a linear space ${\cal K}_{n}^{\sigma}$ with $\sigma$-planar variables defining an orthonormal basis.\footnote{We note that in ${\cal K}_{n}$,  $\sigma$-planar variables do not form the Cartesian basis if $\sigma$ is not a $\{\, i,\, i+1,\, \dots,\, n,\, 1,\, \dots,\, i-1\}$.}
In what follows, we fix once and for all an ordering of $( 1, \dots,n )$, and ${\cal K}_{n}$ is defined as the kinematic space in which $X_{ij}$ form an orthonormal coordinate system. 

The relationship between ABHY realisation in \cite{Arkani-Hamed:2017mur} and \cite{Padrol2019AssociahedraFF} is simply the identifcation of ${\cal E}_{n}$ with ${\cal K}_{n}$ by the isometry 
\begin{flalign}\nonumber
x_{ij}\, =\, X_{ij}\, \hspace{0.1cm} \forall \hspace{0.1cm} (i,j).
\end{flalign}
This realization has several rather remarkable properties that distinguish  it from other realizations of the associahedron, making  it the ``amplituhedron" for bi-adjoint $\phi^{3}$ theory \cite{Arkani-Hamed:2017mur}. Let ${\cal D}$ be the set of all the dissections of an $n$-gon whose vertices are labeled $\{1,\, \dots,\, n\}$ in a clockwise direction. 
\begin{itemize}
\item All the co-dimension one faces (facets) of $A_{n-3}^{T}$ are in bijection with the set 
\begin{flalign}\nonumber
\{\, X_{ij}\, =\, 0 \vert (i,j)\, \in\, {\cal D}\, \}
\end{flalign}
which is the set of all the simple poles in the bi-adjoint color-ordered amplitude, ${\cal M}_{n}(\, (1, \dots, n)\, \vert\, (1, \dots, n)\, )$. 
\item For a $n$ dimensional associahedron, precisely $[\frac{n}{2}]$ pairs of co-dimension one faces are parallel to each other. 
\item On a facet of $A_{n-3}^{T}$ which corresponds to $X_{ij}\, =\, 0$ for some $(i,j)\, \in\, {\cal D}$, $X_{mn}\, > 0 $ $\forall\, (m,n)\, \in\, {\cal D}$ such that $(m,n)\, \cap\, (i,j)\, \neq\, 0$. 
\end{itemize}
For later purposes, we also recall that  $A_{n-3}$ also induces a unique \emph{projective} $(n-3)$-form on ${\cal E}_{n}$.
\begin{flalign}
\Omega^{{\cal E}_{n}}_{n-3}\, =\, \sum_{v\, \in\, A_{n-3}}\, (-1)^{\sigma_{v}}\, \bigwedge_{(ij)\, \in\, v}\, \dd \log x_{ij}
\end{flalign} 
The projectivity is the invariance of $\Omega^{{\cal E}_{n}}_{n-3}$ under $x_{ij}\, \rightarrow\, f(\{x_{mn}\})\, x_{ij}$ for any function $f$.  As ABHY proved, under the identification of ${\cal E}_{n}$ with ${\cal K}_{n}$ $A_{n-3}^{T}$ is the positive geometry for bi-adjoint tree-level S matrix as 
\begin{flalign}
\Omega^{K_{n}}_{n-3}\vert_{A_{n-3}^{T}}\, =\, {\cal M}_{n}(\, (1,\, \dots\, n)\, \vert\, (1,\, \dots,\, n)\, )\, \bigwedge_{(ij)\, \in\, T}\, \dd, X_{ij}
\end{flalign}
In this paper, we use these ideas to look for positive geometries in ${\cal K}_{n}$ which generate the S-matrix of (massive or massless) $\phi^{3}$ theory without color in $D\, \geq\, n$ dimensions.  That is, our goal is to discover a set of linear maps from ${\cal E}_{n}\, \rightarrow\, {\cal K}_{n}$ such that the resulting  image of  the ABHY associahedron in ${\cal K}_{n}$ constitute a family of polytopes whose co-dimension one facets exhaust all the poles of the tree-level S matrix with cubic interaction.  

We now review the discovery of a family of polytopes known as Permutahedra in the kinematic space whose canonical form has poles at $\{\, s_{1\sigma(2)\, \dots\, \sigma(i)}\, \vert\, 2\, \leq\, i\, \leq\, n-1\, \vert\, \sigma\, \in\, S_{n-2}\, \}$. 

Starting from the seminal work in  \cite{Arkani-Hamed:2017mur}, it has been realized that a simple polytope whose vertex set is bijection with the set of poles defined above is the permutahedron ${\cal P}_{n}$ \cite{Arkani-Hamed:2017mur,Early:2018zuw}. This is because, as we review below,  ${\cal P}_{n}$ is a combinatorial polytope whose vertex set is in bijection with $S_{n-2}$, 
\cite{2005math......7163P}

 Permutahedron can be defined as follows; \footnote{In the interest of brevity, we are paraphrasing the definition such that it relates to the primary question under investigation.} Given a set $I\, =\, \{1\, \dots,\, n\}$, we fix two elements, say $i,\, i + 1$ as the ``initial" and ``final" vertices and ``rotate $I$" to
 \begin{flalign}
 I\, =\, \{\, i+1,\, i+2,\, \dots,\, n,\, 1,\, 2,\, \dots,\, i\, \}
 \end{flalign}
 Let 
 \begin{flalign}
 I_{i}\, =\, I\, -\, \{i, i+1\}
 \end{flalign}
 We now consider the configuration space defined as
 \begin{flalign}
 \sigma\, \cdot\, I\, =\, \{\, i+1,\, \sigma\, \cdot\, I_{i},\, i\, \}\, \hspace{0.1cm} \forall\, \hspace{0.1cm} \sigma\, \in\, S_{n-2}
 \end{flalign}
We now define a permutahedron ${\cal P}_{n}^{i+1,i}$ as a simple polytope whose vertex set is in bijection with $\sigma\, \cdot I$ and two vertices are adjacent if and only if they are related by a $\sigma$ which exchanges $(m,n)\, \rightarrow\, (n,m)\, \vert\, (m,n)\, \in\, I_{i}$.  ${\cal P}_{n}$ has several interesting properties.
\begin{enumerate}
\item It is a simple polytope with the dimensionality given by the number of propagators in a tri-valent Feynman graph, $n-3$
\item It is a member of the family of polytopes known as Cayley polytopes with the largest number ($(n-2)!$) of vertices among this family. 
\item Given a planar ordering, Each permutahedron has precisely one vertex that corresponds to the planar channel, and all the other vertices correspond to non-planar Feynman diagrams.
\item No vertex of ${\cal P}_{n}^{i+1,i}$ corresponds to the set of propagators of the type 
\begin{flalign}\nonumber
\{\, (i+1,\sigma(i+2)),\, ( \sigma(i+3),\, \sigma(i+4) ),\, \dots,\, (\sigma(i-1), i)\, \}
\end{flalign}
\end{enumerate}
In \cite{Arkani-Hamed:2017mur}, Arkani-Hamed, Bai, He, and Yan, in fact, constructed a convex realization of the Permutahedron in ${\cal K}_{n}$ as follows. 

Given ${\cal P}_{n}^{i+1,i}$, consider a positive orthant  in ${\cal K}_{n}$ defined using the following inequalities.
\begin{flalign}
s_{i+1\sigma(i+2)\, \dots\, \sigma(i+k)}\, \geq\, 0\, \forall\, k \leq\, n-2, \sigma\, \in\, S_{n-2}
\end{flalign}
The set of constraints whose intersection with  this positive orthant maps out the generalized permutahedron are simply, 
\begin{flalign}
s_{ij}\, =\, -\, c_{ij}\, \forall\, i+2\, \leq\, i\, <\, j\, \leq\, i-1
\end{flalign}
The canonical form on the convex ${\cal P}_{n}^{i+1,i}$ is, 
\begin{flalign}
\Omega_{n-3}\vert_{{\cal P}_{n}^{i+1,i}}\, =\, \sum_{\sigma\, \in\, S_{n-2}}\, [\, \prod\, \frac{1}{s_{i+1\sigma(i+2)\, \dots\, \sigma(i-1)}}\, ]\, \bigwedge_{j=i+2}^{i-1}\, \dd s_{i+1\, j}
\end{flalign}
which is a partial contribution to the S-matrix of $\phi^{3}$ theory.  In a beautiful paper  \cite{Early:2018zuw}, Nick Early analyzed these realizations in further detail and proved that they are, in fact, equivalent to a class of simple polytopes called zonotopal generalized permutahedra which were discovered by A. Postnikov \cite{2005math......7163P}.\footnote{This relationship was already anticipated by the authors in \cite{Arkani-Hamed:2017mur}}. 

In spite of the remarkable richness contained in these geometries (see, e.g.\cite{He:2018okq}) as well as the fact that each such ${\cal P}_{n}^{i+1,i}$ has precisely one planar and the rest non-planar channels as vertices, a precise relationship between zonotopal generalized permutahedra  and the S-matrix of ordinary $\phi^{3}$ theory remains to be understood. Property (4) stated above only adds a further layer of mystery in the role permutahedra may play in unraveling the structure of S-matrix without color. 

We will not consider these mythical objects in our work and will instead show how to recover all the poles (planar as well as non-planar) of the S matrix from a set of incarnations of the ABHY associahedron located in various quadrants of ${\cal K}_{n}$. 
\
\section{Permutation induced maps between ${\cal E}_{n}$ and ${\cal K}_{n}$}\label{pimek}
In this section, we define a family of maps between  the embedding space ${\cal E}_{n}$ and the kinematic $\mathcal{K}_{n}$ space parameterized by the Bose symmetry reviewed in section \ref{bose}. Given an element $\sigma \in S_{n}$ we consider the following linear isomorphism, \footnote{In \cite{Arkani-Hamed:2017mur} $s_{\sigma(i)\sigma(i+1)\ldots\sigma(j-1)}$ were called the $\sigma$-planar variables $X_{\sigma(i)\sigma(j)}$.}
\begin{align}\label{yijsig}
    f_{\sigma} : \mathcal{E}_{n} & \rightarrow \mathcal{K}_{n} \endline 
                        x_{ij}   & \mapsto s_{\sigma(i)\sigma(i+1)\ldots\sigma(j-1)}.
\end{align}
As we have fixed the basis in both, ${\cal K}_{n}$ and ${\cal E}_{n}$, each such map $f_{\sigma}$ is a matrix in $GL(\frac{n(n-3)}{2})$. 

The linear isomorphism $f_{\sigma}$ defined in \eqref{yijsig} maps the ABHY associahedra $A_{n}^{T}$ in the (positive quadrant of) the embedding space to a geometric realization of associahedra in kinematic space $\mathcal{K}_{n}$. Given a triangulation $T$ of an $n$-gon and an element $\sigma \in S_{n}$, we get such a geometric realization of associahedra. We denote this geometric realization as $A_{n}^{T,\sigma}$.

More in detail, as discussed in section \ref{review}, given a triangulation, we have a geometric realization of associahedra by considering the intersection of hyper-planes given in equation \eqref{yijcij} with the positive quadrant of the embedding space. Now the linear isomorphism $f_{\sigma}$ maps the hyper-planes to the hyper-planes 
\begin{equation}\label{sigmasijcij}
    s_{\sigma(i)\sigma(j)} = -c_{ij} \hspace{0.3cm} \forall \hspace{0.1cm} (i,j) \hspace{0.1cm} \notin \hspace{0.1cm} T^{c}.
\end{equation}
While the positive quadrant of the embedding space $\mathcal{E}_{n}^{\geq 0}$ is mapped to what we call $\sigma$-positive region $\mathcal{K}_{n}^{\sigma \geq 0} \subset \mathcal{K}_{n}$. The $\sigma$-positive region is given by 
\begin{equation}\label{sigmapositive}
    \mathcal{K}_{n}^{\sigma \geq 0} := \{ s_{\sigma(i),\, \sigma(i+1),\dots,\, \sigma(j)}\, \geq\, 0\, \forall\, 1\, \leq\, i\, <\, j-1\, \leq\, n-1\,   \}.
\end{equation}
The intersection of hyper-planes given by \eqref{sigmasijcij} and the $\sigma$-positive region $\mathcal{K}_{n}^{\sigma \geq 0}$ gives us the geometric realization $A_{n}^{T,\sigma}$.

We now expand on several characteristics of $A_{n}^{T, \sigma}$ for generic $\sigma \in S_{n}$, which reveal the similarities and differences of this geometry with the ABHY realization. 
\begin{itemize}
 \item  The set $F_{\sigma}$ of all the co-dimension one boundaries of $A_{n}^{T, \sigma}$ are in bijection with the following set of poles,
    \begin{flalign}
    F_{\sigma}\, \overset{1-1}{=}\, \{\, s_{\sigma(i),\, \dots,\, \sigma(j-1)}\, \vert\, 1\leq\, i\, < j - 1\, \leq\, n\, \}
    \end{flalign}
    \item It is important to note that $F_{\sigma}$ is in bijection with the set of poles of the color ordered amplitude in bi-adjoint $\phi^{3}$ theory with the ordering $(\sigma(1),\, \dots,\, \sigma(n)\, \vert\, \sigma(1),\, \dots,\, \sigma(n)\, )$. However, one crucial difference with the bi-adjoint case is worth emphasizing: The kinematic space in the case of bi-adjoint color ordered amplitude is defined via the $\sigma$-planar kinematic variables, $X_{\sigma(i), \sigma(j)}$ (for a given a color order defined by a representative $\sigma$), \cite{Arkani-Hamed:2017mur}. In contrast, in  the present case where the external states have no color,  ${\cal K}_{n}$ is fixed once and for all. Our goal is to prove that the complete tree-level S matrix is given by $n-3$ forms on ${\cal K}_{n}$.
\item As the following lemma proves, the union over all the deformed realisations of $A_{n-3}$ is a positive geometry, \cite{Damgaard:2021ab}.
\end{itemize}
\begin{lemma}\label{ansigansigprime}
For any two distinct cosets $[\sigma_{1}], [\sigma_{2}]\, \in\, {\cal G}_{n}$ $\exists\, \sigma\, \in\, [\sigma_{1}],\,  \sigma^{\prime}\, \in\, [\sigma_{2}]$ such that, 
\begin{flalign}
A_{n-3}^{T, \sigma}\, \cap\, A_{n-3}^{\sigma^{\prime}}\, =\, \{ 0 \}
\end{flalign}
\end{lemma}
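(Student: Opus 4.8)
The plan is to show that two (suitably chosen) deformed associahedra sitting in different cosets intersect only at the origin, by exhibiting a linear functional on $\mathcal{K}_n$ that is nonnegative on one and nonpositive on the other, with a common zero locus only at $0$. Concretely, I would first reduce to the case $[\sigma_1] = [e]$, so that $A_{n-3}^{T,\sigma}$ for $\sigma \in [e]$ is the ordinary ABHY associahedron $A_{n-3}^T$ sitting in the standard positive quadrant $\mathcal{K}_n^{\geq 0} = \{X_{ij} \geq 0\}$; this is legitimate because $f_\sigma$ for any $\sigma$ is a linear isomorphism and the statement is invariant under relabelling. Then I need to pick a good representative $\sigma' \in [\sigma_2]$. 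Since $[\sigma_2] \neq [e]$, no representative of $[\sigma_2]$ is a dihedral relabelling of the identity, which (as noted in the commented-out passage) forces the $\sigma'$-positive cone $\mathcal{K}_n^{\sigma' \geq 0}$ to be cut out by at least one inequality $s_{\sigma'(i)\cdots\sigma'(j)} \geq 0$ whose left-hand side is \emph{not} a planar Mandelstam $X_{mn}$ — call this a genuinely non-planar variable $s_{I}$.

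The key step is then the following: both associahedra live inside their respective positive cones, $A_{n-3}^T \subset \mathcal{K}_n^{\geq 0}$ and $A_{n-3}^{T,\sigma'} \subset \mathcal{K}_n^{\sigma' \geq 0}$, so any point $p$ in the intersection satisfies $X_{mn}(p) \geq 0$ for all planar chords $(m,n)$ and $s_{\sigma'(k)\cdots\sigma'(\ell)}(p) \geq 0$ for all the $\sigma'$-planar subsets. I would expand the genuinely non-planar invariant $s_I$ in the planar basis $\{X_{mn}\}$: one has $s_I = \sum_{a<b,\, a,b \in I} 2 p_a \cdot p_b$, and using $2p_a\cdot p_b = X_{a,b+1} + X_{a+1,b} - X_{a,b} - X_{a+1,b+1}$ (the $c_{ab} = -y_{ab}$ relations, massless case, plus mass shifts in the massive case which are constants and drop out of the homogeneous part) one writes $s_I = \sum (\text{some } X\text{'s with}\ +) - \sum(\text{some } X\text{'s with}\ -)$. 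The claim to verify is that on $\mathcal{K}_n^{\geq 0}$ — intersected with the ABHY hyperplane constraints $y_{ij} = -c_{ij}$ defining $A_{n-3}^T$ — a genuinely non-planar $s_I$ is \emph{strictly negative} unless the point is $0$; equivalently, that the planar associahedron $A_{n-3}^T$ lies strictly on one side of the hyperplane $s_I = 0$. Granting that, $p$ must have $s_I(p) \le 0$ from the ABHY side and $s_I(p) \ge 0$ from the $\sigma'$-side, so $s_I(p) = 0$, which by strictness forces $p = 0$.

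Thus the heart of the argument — and the main obstacle — is establishing the \emph{strict sign definiteness} of a non-planar Mandelstam on the ABHY associahedron (or at least on the positive quadrant cut by the ABHY equations). I expect this to follow from the facet inequalities of $A_{n-3}^T$ together with the structure of the $c_{ij}$: on $A_{n-3}^T$ every planar $X_{mn} \ge 0$ and the $y$-constraints express the "crossing" combinations as negative constants, and a non-planar $s_I$ is, up to an additive positive constant built from the $c_{ij}$, a sum of terms $-y_{ab}$ over crossing pairs, hence manifestly negative on the polytope; pushing this to strict negativity away from $0$ uses that the polytope has nonempty interior and the relevant functional is not identically zero there. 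An alternative, cleaner route is to invoke the known fact (e.g. from \cite{Damgaard:2021ab}) that distinct ABHY realizations in distinct cones meet only at the origin because their normal fans are opposite along the separating hyperplane, and simply verify the combinatorial input — that $[\sigma_1] \neq [\sigma_2]$ produces at least one chord that is a facet of one polytope and a "wrong-sign" direction for the other. I would carry out the reduction, fix the representatives, identify the separating variable $s_I$, prove its strict sign on $A_{n-3}^T$, and conclude; the sign-definiteness lemma is where the real work lies and where I would spend most of the effort, likely by an explicit induction on $n$ or by a direct argument using the recursive facet structure of the associahedron.
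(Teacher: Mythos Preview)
Your overall strategy---reduce to $[\sigma_1]=[e]$ and then exhibit a separating linear functional---is sound and is essentially what the paper does. But the specific separating functional you propose does not work, and this is a genuine gap rather than a technicality.

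You want to pick a $\sigma'$-planar variable $s_I$ that is non-planar for the identity ordering, and then argue that on the identity ABHY associahedron $A_{n-3}^T$ one has $s_I<0$. Your justification is that ``a non-planar $s_I$ is, up to an additive positive constant built from the $c_{ij}$, a sum of terms $-y_{ab}$ over crossing pairs, hence manifestly negative.'' This fails because not every $y_{ab}$ is fixed to $-c_{ab}$: for the reference $T=\{(1,3),\dots,(1,n-1)\}$ the constraints are $s_{ab}=-c_{ab}$ only for pairs not involving $n$, and the remaining $s_{a,n}$ (equivalently some planar $X$'s) vary over the polytope. A concrete counterexample at $n=6$: take $I=\{2,4,5\}$, which is $\sigma'$-planar for e.g.\ $\sigma'=(1\,2\,4\,5\,3\,6)$. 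On $A_3^T$ one computes
\[
s_{245}=s_{24}+s_{25}+s_{45}=-c_{24}-c_{25}+X_{46}=c_{15}+c_{35}-c_{24}-X_{15},
\]
and at the vertex $X_{13}=X_{14}=X_{15}=0$ this equals $c_{15}+c_{35}-c_{24}$, which is positive for generic $c$'s. Even restricting to $|I|=2$ does not save you: $s_{26}=-X_{14}+c_{24}+c_{25}-c_{13}$ on the same polytope, again indefinite. So the ``strict sign-definiteness'' you flag as the heart of the argument is simply false for a generic non-planar $s_I$.

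The paper avoids this by reversing the roles of the two polytopes in the separation. Instead of taking a $\sigma'$-planar variable and arguing it is negative on $A_{n-3}^T$, it finds a pair $a,b$ that are \emph{adjacent} in the identity ordering (so $s_{ab}$ equals some planar $X$, hence $\geq 0$ on $A_{n-3}^T$ tautologically) but \emph{non-adjacent} in the $\sigma'$-ordering in such a way that $s_{ab}$ appears among the $\sigma'$-ABHY constraints and is therefore pinned to a negative constant $-c$ on $A_{n-3}^{T,\sigma'}$. Both sign conditions are then immediate from the definitions, with no estimate needed; the only work is the combinatorial one of exhibiting such a pair (and choosing representatives in the cosets to handle the boundary case involving the label $n$). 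Your argument can be repaired along exactly these lines, but as written the key lemma you would need is false.
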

\begin{proof}
Without loss of generality, let us assume that  $T = \{\, (1,3), \dots, (1,n-1)\, \}$ is a reference triangulation for the ABHY associahedron. We can always choose $\sigma, \sigma^{\prime}$ such that
\begin{flalign}
\sigma(1)\, =\, \sigma^{\prime}(1)\, =\, 1
\end{flalign}
As $\sigma, \sigma^{\prime}$ are representatives of two distinct elements of ${\cal G}_{n}$, we note that there is at least one sqeucne of length k $(i_{1} <\, i_{2},\, \dots,\,  <\, i_{k})$ in $\sigma$-ordering which is mapped to $(i_{k}, i_{1},\, \dots,\, i_{k-1} )$ in $\sigma^{\prime}$-ordering. (Note that $i_{m}, i_{m+1}$ do not have to be successive entries.) We consider one such sqeuence among those which have smallest length $k_{min}\, \geq\, 2$.  Without loss of generality, let $\sigma\, =\, \textrm{id}$. Now $\sigma^{\prime}$ can be such that either $i_{k_{min}}\, < n$ or $i_{k_{min}} = n$. We consider the two cases separately.

If $i_{k_{min}}\, <\, n$ then for the chosen $T$,  $A_{n-3}^{T,\sigma^{\prime}}$ is realised in the hyper-plane located at
\begin{flalign}
\textrm{either}\ s_{i_{1}-1,i_{1}}\, =\, -c\ \textrm{or}\ s_{i_{k_{min}}i_{k_{min}}+1}\, =\, - c^{\prime}
\end{flalign}
for some negative constant $c$ ($c^{\prime}$).
As $A_{n-3}^{T,\sigma = id}$ is realised in the quadrant in which both of these variables are $\geq\, 0$ implies that the intersection between these two associahedra is empty.

If $i_{k} = n$, then the above argument does not directly apply as the associahedra constraints which locate $A_{n-3}^{T, \sigma}$ do not impose $s_{i n} = - c_{i n}$ for any $1\, \leq\, i\, \leq\, n-2$.  To be specific, let $k_{min}\, =\, 2$ and consider  
\begin{flalign}
\sigma^{\prime}\, \circ\, (1,\, \dots,\, n)\, =\, (\, 1,\, \dots,\, n-2, n, n-1)
\end{flalign}
Then in the same right coset to which $\sigma^{\prime}$ belongs $\exists\, \sigma^{\prime\prime}$ such that
\begin{flalign}
\sigma^{\prime\prime}\, \circ\, (1\, \dots\, n)\, =\, (\, 1, n-1, n, \dots,\, 3, 2)
\end{flalign}
Clearly
\begin{flalign}
A_{n-3}^{T, \sigma}\, \cup\, A_{n-3}^{T, \sigma^{\prime\prime}}\, =\, \{0\}
\end{flalign}
as the latter is located in the quadrant $s_{1,n-1}\, \geq\, 0$ whereas the former is located in the hyperplane $s_{1,n-1} = - c_{1,n-1}$.  Similar line of argument for $k_{min}\, >\, 2$ can be readily formulated. 
This completes the proof. In $n = 4$ case, a pictorial representation of the lemma can be found in figure (\ref{4ptnonplanarassociahedra}).
\end{proof}

\subsection{A family of Scattering forms on ${\cal K}_{n}$}\label{dfiks}
The associahedron defines a unique canonical $(n-3)$ form  $\Omega(A_{n})$ on ${\cal E}_{n}$.  
\begin{flalign}
\Omega_{n-3}\, = \Omega(A_{n}) =\, \sum_{v\, \in\, A_{n}}\, (-1)^{T_{v}}\, \bigwedge_{(i,j)\, \in\, T_{v}}\, \dd \log, x_{ij}
\end{flalign}
As proved in \cite{Arkani-Hamed:2017mur}, $\Omega_{n-3}$ is the unique $\dd \log$ form which is invariant under projective transformation generated by any smooth function $f\, \in\, C^{\infty}({\cal E}_{n})$, 
\begin{flalign}
x_{ij}\, \rightarrow\, f(\{x_{mn}\})\, x_{ij}
\end{flalign}
The pullback of projectively invariant $\Omega(A_{n})$ by $\{\, f_{\sigma}^{-1}\, \vert\, \sigma\, \in\, S_{n}\, \}$  generates $n!$ projective invariant $\dd \log$ forms on ${\cal K}_{n}$. 

\begin{equation}
    f^{-1 \, \star}_{\sigma} \Omega(A_{n}) = \Omega(A_{n}^{\sigma}) = \sum_{v \in A_{n}} (-1)^{T_{v}} \bigwedge_{(i,j)\in T_{v}} \dd \log s_{\sigma(i)\sigma(i+1)\ldots\sigma(j-1)}.
\end{equation}

Poles of $\Omega(A_{n}^{\sigma})$ are the $\frac{n(n-3)}{2}$ co-dimension one hyperplanes 
\begin{flalign}
s_{\sigma(i)\, \dots\, \sigma(j-1)}\, =\, 0
\end{flalign}

One of the striking and central results in \cite{Arkani-Hamed:2017mur} was the following. Pull back of $\Omega_{n-3}$ to the ABHY associahedron $A_{n}^{T}\, \in\, {\cal E}_{n}^{\geq\, 0}$ is given by the following formula. 
\begin{flalign}
\Omega_{n-3}  \vert_{A_{n}^{T}}  =  \left[  \sum_{v  \in  A_{n}}  \prod_{(i,j)  \in  v}  \frac{1}{x_{ij}}  \right]  \bigwedge_{(m,n)  \in  T}  \dd x_{mn}  =: {\cal M}_{n}\, \bigwedge_{(m,n)  \in  T}  \dd x_{mn}
\end{flalign}
Where the rational function ${\cal M}_{n}$ is defined as,
\begin{flalign}
{\cal M}_{n}\, =\, \sum_{v\, \in\, A_{n}}\, \prod_{(i,j)\, \in\, v}\, \frac{1}{x_{ij}}
\end{flalign}
It then immediately follows that pullback of $f_{\sigma}^{-1\, \star}\, \Omega_{n-3}$ on $A_{n}^{T, \sigma}$ is obtained by simply substituting $s_{\sigma(i),\, \dots,\, \sigma(j-1)}$ for $x_{ij}$ in the above formula. 
\begin{flalign}
\left( f_{\sigma}^{-1\, \star}\, \Omega_{n-3} \right)\vert_{A_{n}^{T, \sigma}}\, =\, \sum_{v\, \in\, A_{n}}\, \left[ \prod_{(i,j)\, \in\, T_{v}}\frac{1}{s_{\sigma(i)\sigma(i+1) \dots \sigma(j-1)}}\, \right] \bigwedge_{(m,n)\, \in\, T}\, ]\, \dd  s_{\sigma(m)\, \dots\, \sigma(n-1)}\\
=:\, {\cal M}_{n}(\sigma)\, \bigwedge_{(m,n)\, \in\, T}\, \dd s_{\sigma(m)\, \dots\, \sigma(n-1)},
\end{flalign}
where 
\begin{flalign}
{\cal M}_{n}(\sigma)\, : =\, \sum_{v\, \in\, A_{n-3}}\, \left[ \prod_{(i,j)\, \in\, T_{v}}\frac{1}{s_{\sigma(i)\sigma(i+1) \dots \sigma(j-1)}}\, \right].
\end{flalign}
Note that ${\cal M}_{n}(\sigma)$ can also be interpreted as the volume of the dual associahedron which is computed using the pull back form $\Omega_{n-3}^{T, \sigma}$. 

Now let's see how we can get the tree-level scattering amplitude of $\phi^{3}$-theory. To each geometric realization $A_{n}^{T,\sigma}$, we can associate a canonical form $\Omega(A_{n}^{T,\sigma})$. Naively if we sum over all such forms, we get $\sum_{T} \sum_{\sigma \in S_{n}} \Omega(A_{n}^{T,\sigma})$. However, we are grossly over-counting in this sum. The following two lemmas tell us how to get rid of these redundancies. 

\begin{tcolorbox}[colback=black!5!white,colframe=black!75!black]
\begin{lemma}\label{quotientingalltriang}
Given an element $\sigma \in S_{n}$, the canonical forms,  $\Omega(A_{n}^{T_{1},\sigma}) = \Omega(A_{n}^{T_{2},\sigma})$, (with an appropriate choice of orientation), for all triangulations $T_{1}$ and $ T_{2}$ of an $n$-gon.
\end{lemma}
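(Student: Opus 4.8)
The plan is to reduce the statement to the manifest $T$-independence of a single projective $\dd\log$ form and then transport it along the linear map $f_{\sigma}$. Since $f_{\sigma}\colon{\cal E}_{n}\to{\cal K}_{n}$ is a linear isomorphism it intertwines the canonical-form construction, so $\Omega(A_{n}^{T,\sigma}) = f_{\sigma}^{-1\,\star}\,\Omega(A_{n}^{T})$ for every reference triangulation $T$; it therefore suffices to prove $\Omega(A_{n}^{T_{1}}) = \Omega(A_{n}^{T_{2}})$ for the ABHY realizations in ${\cal E}_{n}$ and then apply $f_{\sigma}^{-1\,\star}$ to both sides.

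For the ABHY realizations I would invoke the result recalled above \cite{Arkani-Hamed:2017mur}: for \emph{every} reference triangulation $T$ the canonical form of $A_{n}^{T}$ is the restriction to $A_{n}^{T}$ of the single form
\[
\Omega_{n-3}\;=\;\sum_{v\,\in\,A_{n}}(-1)^{T_{v}}\bigwedge_{(i,j)\,\in\,v}\dd\log x_{ij}.
\]
The key observation is that this expression refers only to the vertex set of the combinatorial associahedron $A_{n}$ and, at each vertex $v$, to the chords $(i,j)$ triangulating the $n$-gon there --- no reference triangulation appears in it; correspondingly, the facet labelling of $A_{n}^{T}$ (facet $(i,j)$ on $\{x_{ij}=0\}$) and all vertex--facet incidences coincide with those of $A_{n}$ for every $T$, which is exactly what the ABHY construction guarantees. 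Hence $\Omega_{n-3}$ is literally one and the same form for all $T$, and applying $f_{\sigma}^{-1\,\star}$ yields
\[
\Omega(A_{n}^{T,\sigma})\;=\;\sum_{v\,\in\,A_{n}}(-1)^{T_{v}}\bigwedge_{(i,j)\,\in\,v}\dd\log s_{\sigma(i)\sigma(i+1)\ldots\sigma(j-1)},
\]
an expression in which $T$ does not appear; the lemma follows.

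The argument is essentially formal once the ABHY statement above is granted, so I do not expect a substantial obstacle; the two points deserving a little care are the following. (i) One must make sense of the asserted equality, because $A_{n}^{T_{1},\sigma}$ and $A_{n}^{T_{2},\sigma}$ sit in \emph{different} affine slices of ${\cal K}_{n}$ --- the slices cut out by $s_{\sigma(i)\sigma(j)} = -c_{ij}$ for $(i,j)\notin T_{1}^{c}$, resp.\ $(i,j)\notin T_{2}^{c}$; the resolution is that both are restrictions of the single projective form $f_{\sigma}^{-1\,\star}\Omega_{n-3}$, whose $\dd\log$ expression in the facet variables $s_{\sigma(i)\ldots\sigma(j-1)}$ is the $T$-free formula displayed above. (ii) The signs $(-1)^{T_{v}}$ are defined relative to an orientation of the slice, so to obtain literal equality one fixes orientations of the two slices compatibly --- this is the ``appropriate choice of orientation'' in the statement --- after which the sign attached to each vertex $v$ agrees, being determined by the oriented combinatorial associahedron alone.
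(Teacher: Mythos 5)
Your proposal is correct and follows essentially the same route as the paper: both arguments observe that $\Omega(A_{n}^{T,\sigma})$ is the pullback under $f_{\sigma}^{-1}$ of the single projective form $\Omega_{n-3}$ on ${\cal E}_{n}$, which makes no reference to the triangulation $T$, so the pulled-back forms coincide for all $T$. Your extra remarks on the differing affine slices and on fixing orientations are sensible refinements of the same argument, not a different method.
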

\end{tcolorbox}
\begin{proof}
The canonical forms $\Omega(A_{n}^{T,\sigma})$ are the pull backs of the canonical form $\Omega(A_{n}^{T})$ on the embedding space via the diffeomorphism $f_{\sigma}^{-1}$. As the canonical form of $A_{n}^{T}$ in the embedding space is the same for all triangulations $T$, their pullbacks should also be the same. Therefore, $\Omega(A_{n}^{T_{1},\sigma}) = \Omega(A_{n}^{T_{2},\sigma})$, for all triangulations $T_{1}$ and $ T_{2}$ of an $n$-gon.
\end{proof}

The lemma \ref{quotientingalltriang} tells us that the sum over triangulations is redundant. We could choose to fix any triangulation and work with it. For concreteness, we fix reference $T$ to be, 
\begin{flalign}
T = \{ (1,3), \dots, (1,n-1) \}.
\end{flalign}

\begin{tcolorbox}[colback=black!5!white,colframe=black!75!black]
\begin{lemma}\label{quotientingbose}
Given a triangulation $T$, the canonical forms,  $\Omega(A_{n}^{T,\sigma_{1}})=\Omega(A_{n}^{T,\sigma_{2}})$, whenever $\sigma_{1}$ and $\sigma_{2}$ belong to the same right coset of $D_{n}\subset S_{n}$.
\end{lemma}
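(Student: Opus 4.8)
The plan is to reduce Lemma~\ref{quotientingbose} to the statement that the canonical form $\Omega_{n-3}$ on $\mathcal{E}_{n}$ is invariant, up to orientation, under the linear automorphisms of $\mathcal{E}_{n}$ that permute the coordinates $x_{ij}$ by the dihedral symmetries of the $n$-gon, and then transport that invariance to $\mathcal{K}_{n}$ through $f_{\sigma_{1}}^{-1}$. By construction $\Omega(A_{n}^{T,\sigma})=f_{\sigma}^{-1\,\star}\,\Omega_{n-3}$, and by Lemma~\ref{quotientingalltriang} this is independent of $T$; so it suffices to prove $f_{\sigma_{1}}^{-1\,\star}\Omega_{n-3}=f_{\sigma_{2}}^{-1\,\star}\Omega_{n-3}$ whenever $\sigma_{1}$ and $\sigma_{2}$ differ by an element of $D_{n}$, say $\sigma_{2}=\sigma_{1}\circ d$ with $d\in D_{n}$ realised as the symmetry group of the $n$-gon with vertices $1,\dots,n$. (One sanity check I would record: $|S_{n}|/|D_{n}|=(n-1)!/2$ equals the stated number of maps $f_{[\sigma]}$, so $D_{n}$ is exactly the group by which one should quotient.)

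The first step is to identify the linear map relating $f_{\sigma_{1}}$ and $f_{\sigma_{2}}$. Since $s_{I}=(\sum_{i\in I}p_{i})^{2}$ depends only on the \emph{set} $I$ (and $s_{I}=s_{I^{c}}$), one has $f_{\sigma}(x_{ij})=s_{\sigma(\{i,i+1,\dots,j-1\})}$; that is, the chord label $(i,j)$ is just a name for the cyclic interval $\{i,\dots,j-1\}$. A dihedral symmetry $d$ carries cyclic intervals to cyclic intervals, hence chords to chords, and so induces a linear automorphism $R_{d}$ of $\mathcal{E}_{n}$ with $R_{d}:x_{ij}\mapsto x_{d\cdot(i,j)}$. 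Because $d$ is a bijection of $\{1,\dots,n\}$, one checks directly that $f_{\sigma\circ d}(x_{ij})=s_{\sigma(d(\{i,\dots,j-1\}))}=s_{\sigma(\text{interval of }d\cdot(i,j))}=f_{\sigma}(x_{d\cdot(i,j)})$, i.e. $f_{\sigma\circ d}=f_{\sigma}\circ R_{d}$. Hence $f_{\sigma_{2}}^{-1}=R_{d}^{-1}\circ f_{\sigma_{1}}^{-1}$ and $\Omega(A_{n}^{T,\sigma_{2}})=f_{\sigma_{1}}^{-1\,\star}\big(R_{d}^{-1\,\star}\,\Omega_{n-3}\big)$, so the lemma follows once $R_{d}^{-1\,\star}\Omega_{n-3}=\pm\,\Omega_{n-3}$.

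For this last point I would use that $R_{d}$ permutes the coordinate hyperplanes $\{x_{ij}=0\}$ among themselves, so $R_{d}^{-1\,\star}\Omega_{n-3}$ is again a projective $\dd \log$ form with simple poles exactly on those hyperplanes; equivalently, $R_{d}$ realises linearly the combinatorial automorphism of the associahedron $A_{n}$ given by the polygon symmetry $d$, acting on the vertices (triangulations) by the bijection $v\mapsto d\cdot v$. Since the canonical form of a simple polytope is carried to $\pm$ itself by any linear map realising a combinatorial automorphism, $R_{d}^{-1\,\star}\Omega_{n-3}=\pm\,\Omega_{n-3}$; concretely one relabels $v\mapsto d\cdot v$ inside $\Omega_{n-3}=\sum_{v\in A_{n}}(-1)^{T_{v}}\bigwedge_{(i,j)\in T_{v}}\dd \log x_{ij}$ and checks that the orientation signs $(-1)^{T_{v}}$ are permuted consistently, up to a single global sign. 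Combining with the previous step gives $\Omega(A_{n}^{T,\sigma_{2}})=\pm\,\Omega(A_{n}^{T,\sigma_{1}})$, and the overall sign is absorbed into the choice of orientation exactly as in Lemma~\ref{quotientingalltriang}. The genuinely delicate point is precisely this sign bookkeeping: one must verify that the dihedral relabelling collapses $R_{d}^{-1\,\star}\Omega_{n-3}$ to $\pm\Omega_{n-3}$ rather than to some other signed combination of the same $\dd \log$ monomials --- that is, that $D_{n}$ acts on the canonical form of the combinatorial associahedron by a single overall sign; granting that, everything else is functoriality of pullback together with Lemma~\ref{quotientingalltriang}.
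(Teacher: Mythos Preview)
Your proof is correct and rests on the same idea as the paper's, though you unpack the mechanism more explicitly. The paper's argument is two lines: because $\sigma_{1},\sigma_{2}$ lie in the same $D_{n}$-coset, the \emph{sets} $\{f_{\sigma_{1}}(x_{ij})\}$ and $\{f_{\sigma_{2}}(x_{ij})\}$ in $\mathcal{K}_{n}$ coincide, so the two pushed-forward canonical forms have identical pole loci; since a projective canonical form is determined up to a sign by its set of poles (and their compatibility), the forms agree. Your version makes the same observation constructive by factoring $f_{\sigma_{2}}=f_{\sigma_{1}}\circ R_{d}$ through the linear dihedral automorphism $R_{d}$ of $\mathcal{E}_{n}$ that permutes the chord coordinates, and then arguing $R_{d}^{\,\star}\Omega_{n-3}=\pm\,\Omega_{n-3}$ because $R_{d}$ realises a combinatorial automorphism of $A_{n}$. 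That last step is precisely the ``determined by its poles up to sign'' principle the paper invokes, specialised to the dihedral symmetries of the associahedron. So the two proofs are the same argument at different levels of abstraction: the paper appeals directly to uniqueness of the canonical form, while you exhibit the symmetry that witnesses it. Your explicit acknowledgement of the residual global sign --- and that it is absorbed by a choice of orientation exactly as in Lemma~\ref{quotientingalltriang} --- matches the paper's implicit ``up to a sign''.
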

\end{tcolorbox}
\begin{proof}
    If $\sigma_{1}$ and $\sigma_{2}$ belong to the same right coset of $D_{n}\subset S_{n}$, then the linear isomorphisms $f_{\sigma_{1}}$ and $f_{\sigma_{1}}$ map the set of planar variables $\{x_{ij}\}$ to the same set. That is $\{f_{\sigma_{1}}(x_{ij})\} = \{f_{\sigma_{2}}(x_{ij})\}$. This means the poles of canonical forms $\Omega(A_{n}^{T,\sigma_{1}})$ and $\Omega(A_{n}^{T,\sigma_{2}})$ are the same. As the canonical form is determined, up to a sign, by the compatibility of poles, if the set of poles is the same, the canonical forms have to be equal.  
\end{proof}
The lemma \ref{quotientingbose} implies $\sum_{\sigma \in S_{n}} \Omega(A_{n}^{T,\sigma})= 2 n \sum_{[\sigma] \in \mathcal{G}_{n}} \Omega(A_{n}^{T,\sigma})$. Where $\mathcal{G}_{n}$ is the set of right cosets of $D_{n}$. That is, 
\begin{equation}
    \mathcal{G}_{n} = D_{n} \backslash  S_{n}. 
\end{equation}
And in the sum $\sum_{[\sigma] \in \mathcal{G}_{n}} \Omega(A_{n}^{T,\sigma})$, for each coset $[\sigma]$, we take some representative $\sigma \in [\sigma]$. With these redundancies taken care of, we can now write down the scattering amplitude of $\phi^{3}$ theory.
\begin{tcolorbox}[colback=black!5!white,colframe=black!75!black]
\begin{theorem}\label{phicubedamplitude}
The scattering amplitude of $\phi^{3}$ theory is given by 
\begin{equation}\label{phicubedamplitudeprime}
 \mathcal{M}_{n}^{\phi^{3}} (p_{1},\, \dots,\, p_{n} ) = \frac{1}{2^{n-3}} \sum_{[\sigma]\in \mathcal{G}_{n}} {\cal M}_{n}(\sigma^{\prime})
\end{equation}
for any triangulation $T$ and any $\sigma^{\prime}\, \in\, [\sigma]$. 
\end{theorem}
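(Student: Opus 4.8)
The plan is to expand each color-ordered rational function ${\cal M}_{n}(\sigma)$ as a sum over trivalent tree graphs, sum over the cosets ${\cal G}_{n}$, interchange the order of summation, and then use the $S_{n}$-orbit count of section \ref{bose} to show that every Feynman diagram of $\phi^{3}$ theory is reproduced with one and the same multiplicity $2^{n-3}$. First I would record that ${\cal M}_{n}(\sigma)$, regarded as a rational function on ${\cal K}_{n}$, depends only on the coset $[\sigma]\in{\cal G}_{n}$: by the argument proving Lemma \ref{quotientingbose}, two representatives of $[\sigma]$ send the planar variables $\{x_{ij}\}$ to the same set of Mandelstam invariants and preserve their compatibility structure, so the sum $\sum_{v\in A_{n-3}}\prod_{(i,j)\in v}1/s_{\sigma(i)\cdots\sigma(j-1)}$, which runs over the maximal compatible families among those poles, is unchanged --- this also makes ``any $\sigma'\in[\sigma]$'' in \eqref{phicubedamplitudeprime} well posed. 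Second, since the vertices $v$ of $A_{n-3}$ are exactly the triangulations of the reference $n$-gon and $f_{\sigma}$ sends the chord $(i,j)$ to $s_{\sigma(i)\cdots\sigma(j-1)}$, the assignment $v\mapsto\{\,s_{\sigma(i)\cdots\sigma(j-1)}:(i,j)\in v\,\}$ is a bijection from the vertex set of $A_{n-3}$ onto the set of trivalent $n$-point trees that are planar with respect to the cyclic order $(\sigma(1),\dots,\sigma(n))$; hence
\[
{\cal M}_{n}(\sigma)\;=\;\sum_{\substack{D\ \text{trivalent tree}\\ D\ \sigma\text{-planar}}}\ \prod_{I\in D}\frac{1}{s_{I}},
\]
the product running over the internal edges of $D$ (with $s_{I}\to s_{I}-m^{2}$ in the massive case), the property ``$D$ is $\sigma$-planar'' again depending only on $[\sigma]$.

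Summing over ${\cal G}_{n}$ and exchanging the two sums gives
\[
\sum_{[\sigma]\in{\cal G}_{n}}{\cal M}_{n}(\sigma)\;=\;\sum_{D\ \text{trivalent tree}}N(D)\,\prod_{I\in D}\frac{1}{s_{I}},\qquad N(D):=\#\{[\sigma]\in{\cal G}_{n}:D\ \text{is}\ \sigma\text{-planar}\}.
\]
Every trivalent tree admits a planar embedding, so $N(D)\geq 1$ and no diagram is omitted from the right-hand side. To evaluate $N(D)$ I would invoke section \ref{bose}: for a fixed channel (equivalently, trivalent tree) $D$ there are exactly $n\,2^{n-2}$ permutations $\sigma\in S_{n}$ for which $D$ is $\sigma$-planar; since this property is constant on the cosets $[\sigma]\in{\cal G}_{n}$, each of which has $2n$ elements, those permutations fill out exactly $n\,2^{n-2}/(2n)=2^{n-3}$ cosets, so $N(D)=2^{n-3}$ independently of $D$. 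As a cross-check, summing the $C_{n-2}$ vertices of $A_{n-3}$ over the $|{\cal G}_{n}|=\tfrac{(n-1)!}{2}$ cosets gives $\tfrac{(n-1)!}{2}\,C_{n-2}=2^{n-3}(2n-5)!!$, consistent with the identity of section \ref{bose}. Therefore
\[
\sum_{[\sigma]\in{\cal G}_{n}}{\cal M}_{n}(\sigma)\;=\;2^{n-3}\sum_{D\ \text{trivalent tree}}\prod_{I\in D}\frac{1}{s_{I}}\;=\;2^{n-3}\,\mathcal{M}_{n}^{\phi^{3}}(p_{1},\dots,p_{n}),
\]
since $\sum_{D}\prod_{I\in D}1/s_{I}$ is, by the convention of section \ref{bose}, the full color-free tree amplitude of $\phi^{3}$ theory; dividing by $2^{n-3}$ yields \eqref{phicubedamplitudeprime}.

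The main obstacle is the uniform multiplicity in the middle step: one must be sure that the $n\,2^{n-2}$ permutations realizing a given diagram assemble into \emph{complete} cosets in ${\cal G}_{n}$ --- so that the division by $2n$ is legitimate --- and that the resulting integer $2^{n-3}$ is the same for every diagram. This is precisely the content of the $S_{n}$-orbit analysis in section \ref{bose} (a planar embedding of a trivalent tree may be flipped independently at each of its $n-2$ internal vertices, and the orderings so produced constitute full cosets of $D_{n}$). The one further point requiring care is that ${\cal M}_{n}(\sigma)$ is $D_{n}$-invariant as a rational function, and not merely that the associated scattering forms coincide; this follows from the reasoning behind Lemma \ref{quotientingbose} together with the fact that crossing of chords is preserved under rotations and reflections of the $n$-gon. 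Granting these, the remainder is bookkeeping.
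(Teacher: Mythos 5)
Your proposal is correct and follows essentially the same route as the paper: both rest on the section \ref{bose} count that each trivalent channel is reached by $n\,2^{n-2}$ permutations, combined with the $2n$-fold $D_{n}$-coset redundancy of Lemma \ref{quotientingbose}, yielding the uniform multiplicity $n\,2^{n-2}/(2n)=2^{n-3}$. The only difference is organizational — you count cosets diagram-by-diagram, while the paper sums ${\cal M}_{n}(\sigma)$ over all of $S_{n}$ first and then divides by the coset size — which amounts to the same bookkeeping.
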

\end{tcolorbox}
\begin{proof}
    Let's first look at the sum $\sum_{\sigma\in S_{n}} \Omega(A_{n}^{T,\sigma})\vert_{A_{n}^{T,\sigma}}$. As discussed in section \ref{bose}, given any scattering channel $S$ of $\phi^{3}$ theory, there are $n 2^{n-2}$ elements of $S_{n}$ that take some planar diagram to $S$. Therefore
    \begin{equation}
        \sum_{\sigma\in S_{n}} {\cal M}_{n}(\sigma)\, = n\, 2^{n-2} \mathcal{M}_{n}^{\phi^{3}} .
    \end{equation}
    Further, as discussed in lemma \ref{quotientingbose} 
 \begin{flalign}\nonumber
\sum_{\sigma\in S_{n}} \Omega(A_{n}^{T,\sigma})\vert_{A_{n}^{T,\sigma}} = 2 n \sum_{[\sigma] \in \mathcal{G}_{n}} \Omega(A_{n}^{T,\sigma})\, =\, 2 n \sum_{[\sigma]\, \in\, \mathcal{G}_{n}}\, {\cal M}_{n}(\sigma^{\prime})\, \bigwedge_{(i,j)\, \in\, T}\, \dd s_{\sigma^{\prime}(i)\, \sigma(i+1) \dots \sigma^{\prime}(j-1)}
\end{flalign}     
 for any $\sigma^{\prime}\, \in\, [\sigma]$. 
 
 Therefore, 
\begin{equation}
 \mathcal{M}_{n}^{\phi^{3}} = \frac{1}{2^{n-3}} \sum_{[\sigma]\in \mathcal{G}_{n}} {\cal M}_{n}(\sigma^{\prime})
\end{equation}
    
\end{proof}


 This is one of the central results of the paper.\footnote{If we do not normalize the sum by $\frac{1}{2^{n-3}}$ then the result can also be  interpreted as follows: The planar scattering form which generates tree-level S matrix of bi-adjoint scalar theory with coupling $\lambda$ also generates the manifestly crossing symmetric S-matrix of a colorless $\phi^{3}$ theory, but with coupling $\sqrt{2}\, \lambda$.} It shows the precise manner in which the ABHY associahedron is the positive geometry for the \emph{manifestly} crossing-symmetric $\phi^{3}$ S-matrix. In fact, composing the $\sigma$-induced linear maps with a translation, 
\begin{flalign}
s_{\sigma(i)\, \dots,\, \sigma(j)}\, \rightarrow\, \overline{s}_{\sigma(i)\, \dots\, \sigma(j)}\, :=\, s_{\sigma(i)\, \dots,\, \sigma(j)}\, - m^{2}\, \forall\, (i,j)
\end{flalign}
 results in the n-point amplitude for massive $\phi^{3}$ theory, showing how ABHY associahedron is the ``amplituhedron" for scalar field S-matrix with cubic coupling and arbitrary mass. 
We end this section with an observation.
\begin{itemize}
\item Lemma \ref{ansigansigprime} implies that there exists a set of permutations $\{\sigma_{I}\}_{I=1}^{\vert {\cal G}_{n}\vert}$ where each $\sigma_{I}$ is in a different right coset of $D_{n}$ such that
$\oplus_{\sigma_{I}}\, A_{n-3}^{T, \sigma_{I}}$ is a positive  geometry in ${\cal K}_{n}$ with the corresponding canonical form,
\begin{flalign}
\Omega_{n-3}\, :=\, \sum_{\sigma_{I}}\, \Omega_{n-3}^{T, \sigma_{I}}\vert_{A_{n-3}^{T,\sigma_{I}}}
\end{flalign}
\end{itemize}
 \subsection{Deformed realisation for $n=4,\, n=5$}\label{drfn45}
In this sub-section, we provide a few explicit  examples of the deformed realizations $A_{n}^{T, \sigma}$ for $n\, =\, 4,\, 5$.
In the case of $n\, =\, 4$, ${\cal K}_{2}$ is coordinatized by $\{\, X_{13}\, =\, s,\, X_{24}\, =\,t\, \}$ and $\mathcal{G}_{4} = \{ [\textrm{id}] , [\sigma_{1}], [\sigma_{2}]\} $, with $\sigma_{1}\, =\, \bigl(\begin{smallmatrix}
    1 & 2 & 3 & 4 \\
    1 & 3 & 2 & 4
  \end{smallmatrix}\bigr),\, \sigma_{2}\, =\, \bigl(\begin{smallmatrix}
    1 & 2 & 3 & 4 \\
    1 & 3 & 4 & 2
  \end{smallmatrix}\bigr)\,$. It can be immediately checked that,
 \begin{flalign}
 \begin{array}{lll}
 A_{4}^{T,\, \sigma_{1}}\, =\, \{\, s_{13} \geq 0, X_{24}\, \geq\, 0\, \vert\, s = -\, c\, \}\\
 A_{4}^{T,\, \sigma_{2}}\, =\, \{ X_{13} \geq 0, s_{24}\, \geq\, 0\, \vert\, t = -\, c\, \}  
 \end{array}
 \end{flalign}
  The geometric realizations of deformed associahedra $A_{4}^{T, \sigma_{1}}$, $A_{4}^{T, \sigma_{1}}$ and the ABHY associahedra at 4 points are given in figure \ref{4ptnonplanarassociahedra}. 
 \begin{figure}
     \centering
     \includegraphics[scale=0.35]{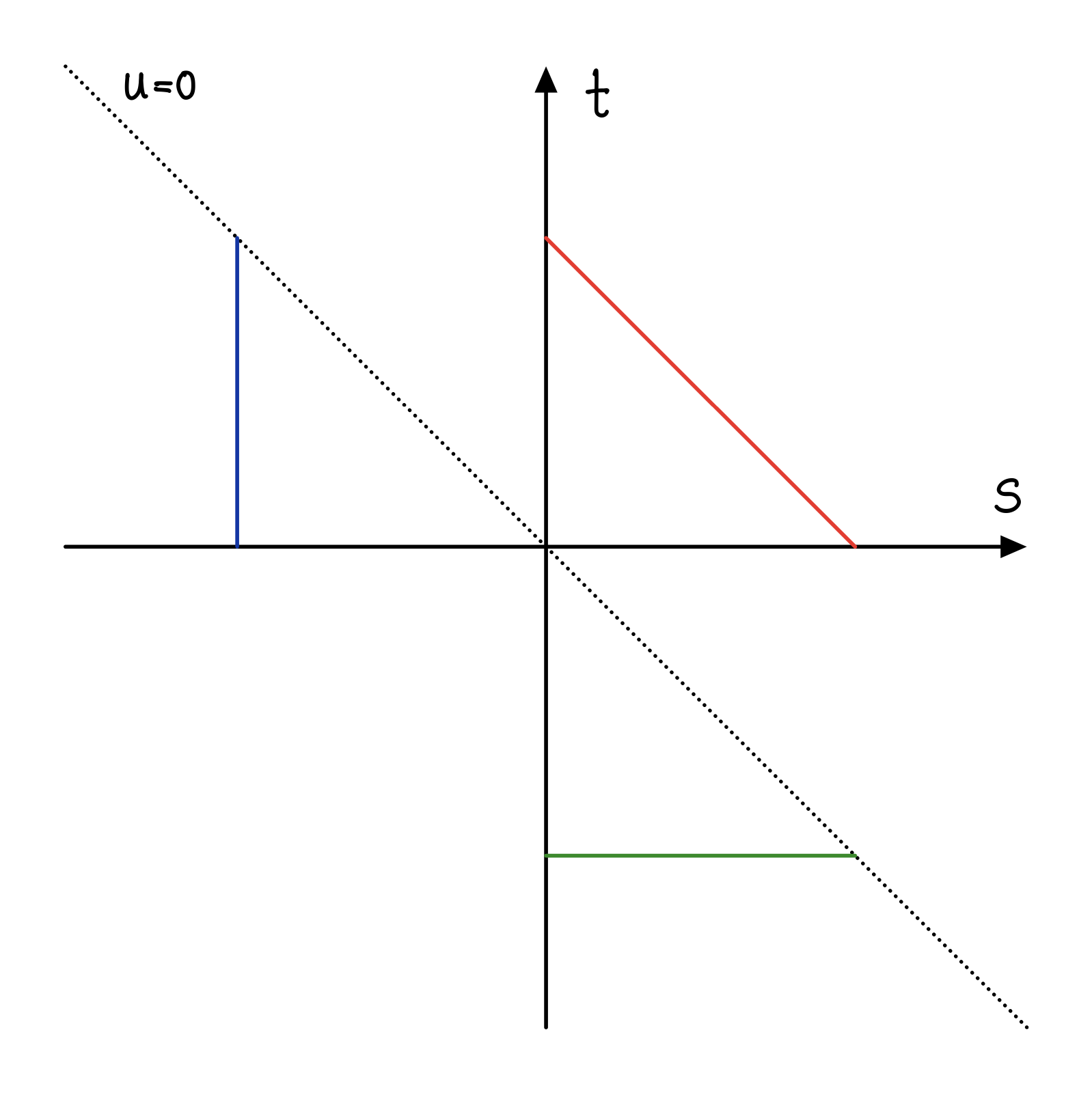}
     \caption{ABHY associahedron and its deformations for n = 4}
     \label{4ptnonplanarassociahedra}
 \end{figure}

The one forms on the three associahedra in ${\cal K}_{4}$ are obtained via pullback of $f_{\sigma}^{-1 \star}\, \Omega_{1}$ on the corresponding associahedra. 
\begin{flalign}
\begin{array}{lll}
\Omega(A_{4}^{T,e})\vert_{A_{4}^{T,e}} =\, (\, \frac{1}{s} + \frac{1}{t}\, )\, \dd s \\
\Omega(A_{4}^{T,e})\vert_{A_{4}^{T,\sigma_{1}}} =\, (\, \frac{1}{t} + \frac{1}{u}\, )\, \dd u\\
\Omega(A_{4}^{T,e})\vert_{A_{4}^{T,\sigma_{2}}} =\, (\, \frac{1}{s}\, +\, \frac{1}{u}\, )\, \dd s
\end{array}
\end{flalign}
Hence,
\begin{flalign}
\frac{1}{2}\, \sum_{[\sigma]\, \in\, {\cal G}_{4}}\, \Omega(A_{4}^{T,\sigma})\vert_{A_{4}^{T,\sigma}} =\, \frac{1}{s} + \frac{1}{t} + \frac{1}{u}\, =\, {\cal M}_{4}(\, p_{1},\, p_{2}, p_{3}  ,\, p_{4}\, )
\end{flalign}

As $\vert S_{5} \vert\, =\, 120$, and ${\cal G}_{5}\, =\, \{\, [e] , [\sigma_{1}] = \Big[ \bigl(\begin{smallmatrix}
    1 & 2 & 3 & 4 & 5 \\
    1 & 4 & 3 & 2 & 5
  \end{smallmatrix}\bigr) \Big] ,  [\sigma_{2}] =\, \Big[ \bigl(\begin{smallmatrix}
    1 & 2 & 3 & 4  & 5\\
    1 & 3 & 5 & 2 & 4 
  \end{smallmatrix}\bigr)\Big],  \ldots \}$, there are many deformed realizations of the ABHY associahedron $A_{5}^{T}$.  We will analyze two of them, which live in different quadrants of ${\cal K}_{5}$ and have an unequal number of non-planar channels.

Under the action of $f_{\sigma}\, \vert\, \sigma\, \in\, {\cal D}_{5}$ it is mapped to a convex realisation in ${\cal K}_{5}$. We first illustrate such a deformation with a couple of examples. 

For $\sigma_{1} = \bigl(\begin{smallmatrix}
    1 & 2 & 3 & 4 & 5\\
    1 & 5 & 2 & 3 & 4
  \end{smallmatrix}\bigr)$, 
\begin{flalign}
f_{\sigma_{1}}^{-1}\, (\, {\cal E}_{5}^{\geq\, 0}\, )\, =\, \textrm{span}(\, \{\, X_{25},\, X_{35},\, s_{25},\, s_{14},\, X_{24}\, \,  \}\, \geq\, 0\, )
\end{flalign}
In this case, $A_{5}^{T, \sigma_{1}}$ is a two-dimensional positive geometry defined by the  $X_{25},\, X_{35}\, \geq\, 0$ region of the 2-plane given by the following equations, 
\begin{flalign}\label{sigma12dd}
\begin{array}{lll}
X_{25}\, -\, X_{35}\, +\, s_{25}\, =\, c_{1}\\
X_{25}\, +\, s_{14}\, =\, c_{2}\\
X_{24}\, +\, X_{35}\, =\, c_{3},
\end{array}
\end{flalign}
where $c_{i}$ are arbitrary positive constants. The five co-dimension-one boundaries of this polytope are located at 
\begin{flalign}\nonumber
\{\, X_{25}, X_{35}, s_{25},\, s_{14},\, X_{24}\, \}\, \rightarrow\, 0
\end{flalign}
The two sided bounds imposed by eqn.(\ref{sigma12dd}) on these kinematic variables imply that, 
\begin{flalign}
\begin{array}{lll}
X_{13} = - c^{\prime}\\
X_{14}\, =\, c_{2}\, -\, X_{24}
\end{array}
\end{flalign}
which shows how $A_{5}^{T, \sigma}$ has no intersection with  the ABHY associahedron $A_{5}^{T}$. 

In the second example $\sigma_{2} = \bigl(\begin{smallmatrix}
    1 & 2 & 3 & 4 & 5\\
    1 & 4 & 2 & 5 & 3
  \end{smallmatrix}\bigr)$,
\begin{flalign}
f_{\sigma_{2}}^{-1}\, (\, {\cal E}_{5}^{\geq\, 0}\, )\, =\, \textrm{span}(\, \{\, s_{14},\, s_{35},\, s_{24},\, s_{25},\, s_{13}\, \}\, \geq\, 0\, )
\end{flalign}
In this case, $A_{5}^{T, \sigma_{2}}$ is a two-dimensional positive geometry  defined by $s_{14}, s_{35}\, \geq\, 0$ region in the 2-plane defined by the equations, 
\begin{flalign}\label{sigma12ddprime}
\begin{array}{lll}
s_{14}\, +\, s_{24}\, -\, s_{35}\, =\, d_{1}\, =\, -\, X_{13}\\
s_{35}\, +\, s_{25}\, =\, d_{2}\, =\, -\, ( X_{25}\, +\, X_{14}\, )\\
s_{14}\, +\, X_{13}\, =\, d_{3}\, =\, -\, (X_{13}\, +\, X_{25}\, )
\end{array}
\end{flalign}
where, as before $d_{1}, d_{2}, d_{3}$ are arbitrary positive constants. None of  the co-dimension one faces of $A_{5}^{T,\, \sigma_{2}}$ correspond to planar poles of the 5-point amplitude. The above equations imply that  $A_{5}^{T, \sigma_{2}}$ is the two-dimensional pentagon which is the intersection of hyper-planes 
\begin{flalign}
X_{13}\, =\, -d_{1}, X_{25}\, =\, -\, d_{3}\, -\, d_{1}, X_{14}\, =\, -\, \sum_{i}\, d_{i}
\end{flalign}
with the cone, 
\begin{flalign}
X_{24}\, +\, 2\, d_{1}\, + 2\, d_{3}\, + d_{2}\, \geq\, 0\, \textrm{and}\, X_{35}\, \leq\, d_{1}\, +\, d_{2}.
\end{flalign}
Once again, it has no intersection with the ABHY associahedron, and in fact, none of the three associahedra $A_{n-3}^{T},\, A_{n-3}^{T\, \sigma_{1}},\, A_{n-3}^{T\, \sigma_{2}}$ intersect each other as they all lie in distinct hyper-planes in ${\cal K}_{n}$. 

We end this section with a few remarks.
\begin{itemize}
\item In \cite{Jagadale:2022rbl}, it was proved how diagonal linear maps of the form $x_{ij}\, =\, \alpha_{ij}\, X_{ij}\, -\, m_{ij}^{2}\,$  for all $(i,j)$ with $ 1\, \leq\, i\, \leq\, j\, \leq\, n$ generates deformed realization of ABHY associahedron in ${\cal K}_{n}^{\geq\, 0}$ which turns out to be positive geometry of color ordered S-matrix of cubic scalar non-derivative interactions where the number of scalar fields, their mass parameters and the strength of various cubic couplings is contained in a family of parameters $(\, \alpha_{ij},\, m_{ij}\, )$. We now show that the composition of such diagonal maps with ${\cal G}_{n}$ induced diffeomorphism leads us to positive geometries for non-planar scalar S matrix with cubic interactions between distinct scalar fields with arbitrary masses.  
\item ABHY associahedron is thus a universal polytope whose simplest possible avatars (obtained simply by linear mappings of embedding space ${\cal E}_{n}$ in the kinematic space) leads to an entire spectrum of tree-level S-matrix with scalar particles.
\end{itemize}
\section{CHY formula for the  $\phi^{3}$ S-matrix without color.}\label{chynp}
The discovery of the diffeomorphic avatars of the ABHY associahedron in the kinematic space has intriguing consequences for  the worldsheet formulation of tree-level S matrix given by Cachazo, He, and Yuan (CHY) in \cite{Cachazo:2013iea}. In essence, building on the seminal result in \cite{Arkani-Hamed:2017mur}, a point of view advocated in \cite{Jagadale:2022rbl} was that the scattering equations should be considered as  diffeomorphism between the (real section) of the compactified moduli space $\overline{{\cal M}}_{0,n}({\bf R})$ and the ABHY associahedron $A_{n-3}^{T}$ in ${\cal E}_{n}^{\geq\, 0}$. That is, given the $\{y_{ij}\}$ coordinates introduced in equation (\eqref{yijxij}),  we consider the ``embedding space" scattering equations,
\begin{flalign}
\sum_{i\, \neq\, j}\, \frac{y_{ij}}{z_{ij}}\, =\, 0.
\end{flalign}
which define a map from the worldsheet to ${\cal E}_{n}$. Restriction of the scattering equations to $A_{n-3}^{T}$ via 
\begin{flalign}
{\cal F} = 0\, \equiv\, \{\, y_{ij}\, =\, -\, c_{ij}\, \forall\, (i,j)\, \notin\, \{\, (2,n),\, (3,n),\, \dots,\, (n-2,n)\, \}\, \}
\end{flalign}
defines a diffeomorphism between the worldsheet associahedron  $A_{n-3}^{\textrm{ws}}\, :=\, \overline{{\cal M}}_{0,n}({\bf R})$ and the ABHY associahedron in ${\cal E}_{n}^{\geq\, 0}$.

Hence composing the diffeomorphism induced by scattering equations with linear map parametrized by $[\sigma]\, \in\, {\cal G}_{n}$ gives us a  (family of) maps between $A_{n-3}^{\textrm{ws}}$ and the kinematic space associahedra $A_{n-3}^{T\, [\sigma]}$.  We call these diffeomorphisms $[\sigma]$-deformed scattering equations obtained by the identifying 
\begin{flalign}
y_{ij}\, =\, s_{\sigma(i)\sigma(j)} 
\end{flalign}
and imposing
\begin{flalign}\label{fsigma}
{\cal F}([\sigma]) = 0\, \equiv\, \{\, s_{\sigma(i)\sigma(j)}\, =\, -\, c_{ij}\, \forall\, (i,j)\, \notin\, \{\, (2,n),\, (3,n),\, \dots,\, (n-2,n)\, \}\, \}
\end{flalign}
The canonical form on $A_{n-3}^{\textrm{ws}}$ is the Parke Taylor form defined as,
\begin{flalign}
\omega_{\textrm{ws}}(z_{1},\, \dots,\, z_{n})\, =\, \frac{\dd^{n-3}z}{z_{12}\, z_{23}\, \dots\, z_{n1}}
\end{flalign}
where $z_{ij}\, :=\, z_{j} - z_{i}$. 

The CHY formula for massless $\phi^{3}$ S matrix can now be written down immediately. As it is simply a weighted sum over the Park-Taylor form evaluated on the solution of the $[\sigma]$-deformed scattering equations.
\begin{align}
   & {\cal M}_{n}(p_{1}, \dots\, p_{n})\, \endline & =\, \frac{1}{2^{n-3}}\, \sum_{[\sigma]\, \in\, {\cal G}_{n}}\, \int_{\textrm{ws}}\, \omega_{WS}\, \left[\, \prod_{i=1}^{n}\, \frac{1}{z_{i} - z_{i+1}}\, \prod_{k}^{\prime}\, \delta\left(\, \sum_{m\, \neq\, k}\, \frac{s_{\sigma(m)\sigma(k)}}{z_{mk}}\, \right)\vert_{{\cal F}([\sigma])\, =\, 0}\, \right]
\end{align}
where $\prod_{k}^{\prime}$ denotes a product over the $n-3$ punctures after removing the $SL(2,R)$ redundancy and the scattering equations are evaluated on ${\cal F}([\sigma])\, =\, 0$ in eqn.(\ref{fsigma}). We note that this formula could have been obtained directly  (i.e., without exploiting the relationship between the CHY formula and canonical form of the associahedron) from the observations made in section \ref{bose}.

\section{Deformed Stokes Polytopes in the kinematic space}\label{phi4}
The positive geometries for planar scattering amplitudes in theory with polynomial interactions belong to the family of polytopes called accordiohedra \cite{Aneesh:2019ddi, Jagadale:2021iab}. An accordiohedron polytope is parametrized by a dissection of a planar $n$-gon where the dissection is coarser than the complete triangulation. The geometric realizations of accordiohedra descend from the geometric realization of the associahedron by projecting the associahedron equation defined with respect to, say, $T$ to the coarser dissection.  The relevance of accordiohedron to the S-matrix program was discovered in \cite{Banerjee:2018tun, Aneesh:2019cvt} where it was shown that the accordiohedron parametrized by quadrangulations constitute positive geometry of color-ordered S-matrix with quartic scalar interactions. 

Any quadrangulation $Q$ of an $n$-gon generates a Stokes polytope.  Although the precise combinatorial definition is rather involved, the construction can be understood as follows: $Q$ divides the $n$-gon into a collection of cells (quadrilaterals).  Now consider a dual polygon whose vertices correspond to the edges of the original $n$-gon. Any quadrangulation $Q^{\prime}$ of the dual $n$-gon is said to be compatible with $Q$ if any chord in $Q^{\prime}$ only enters and exits a given cell of the $n$-gon via adjacent edges. A Set of all such $Q^{\prime}$s related to each other via mutations generate the Stokes polytope ${\cal AC}_{Q}$, which is a closed and convex positive geometry.  We refer the reader to \cite{dmtcs:2572, Manneville2019GeometricRO} for more details on accordiohedron and Stokes Polytope.

To obtain a geometric realization of the Stokes polytope $\mathcal{AC}_{Q}$ associated with a quadrangulation 
\begin{displaymath}
Q= \{(i_{1}j_{1}), \ldots (i_{\frac{n-4}{2}} j_{\frac{n-4}{2}}) \}, 
\end{displaymath}
we consider a triangulation $T= \{ (i_{1} j_{1}), \ldots (i_{n-3} j_{n-3}) \}$ such that $Q\subset T$. The geometric realization of $\mathcal{AC}_{\{Q,T\}}$ is obtained simply via projecting $A_{n-3}^{T}$ on to the subspace spanned by the variables 
\begin{displaymath}
\{ X_{i_{1}j_{1}} , \ldots X_{i_{\frac{n-4}{2}} j_{\frac{n-4}{2}}} \}
\end{displaymath}
Given a quadrangulation $Q$, it fixes a unique planar  $\frac{n-4}{2}$ form in ${\cal K}_{n}$ as follows. 
This immediately implies that the restriction of the $\frac{n-4}{2}$ planar scattering form $\Omega_{n}^{Q}$ to the ABHY associahedron $A_{n-3}^{T}$ generates the planar quartic scalar amplitudes ${\cal M}^{\phi^{4}}_{n}(p_{1}\, \dots\, p_{n})$ 
\cite{Aneesh:2019cvt}.
\begin{flalign}
\Omega_{\frac{n-4}{2}}^{Q(T)}\, \vert_{A_{n-3}^{T}}\, =\, {\cal M}^{\phi^{4}}_{n}(p_{1}\, \dots\, p_{n})\, \bigwedge_{(i,j)\, \in\, Q(T)}\, \dd x_{ij}
\end{flalign}
Given any $\sigma\, \in\, S_{n}$ we can once again consider pull back of the projective $\frac{n-4}{2}$ form on ABHY associahedron.\footnote{If $\sigma\, \in\, S_{n}$ then the deformed associahedron is of ABHY type, but realized in a different positive quadrant of the kinematic space.} 
\begin{flalign}
f_{\sigma}^{-1 \star}\, \cdot\, \Omega_{Q(T)}\, \vert_{(A_{n-3}^{T, \sigma})}\, =\, \sum_{Q^{\prime}}\, [\, \prod_{(m,n)\, \in\, Q^{\prime}}\, \frac{1}{s_{\sigma(m)\, \cdot\, \sigma(n)-1}}\, ]\, \bigwedge_{(i,j)\, \in\, Q}\, \dd s_{\sigma(i)\, \dots\, \sigma(j)-1}
\end{flalign}
where the sum is over all the quadrangulations compatible with the reference quadrangulation. 

The color-ordered (planar) tree-level S matrix of $\phi^{4}$ theory is a weighted sum over canonical forms associated with Stokes Polytopes ${\cal AC}_{Q}$. Every Stokes polytope is realized in the linearity space $\{\, X_{ij}\, \geq\, 0\, \vert\, (i,j)\, \in\, Q\, \}$ inside ${\cal K}_{n}^{\geq\, 0}$.  As we now argue, this result can also be understood via the ideas introduced in this paper. 

Consider the dihedral group ${\cal D}_{n}\, \subset\, S_{n}$ and label all the permutations in ${\cal D}_{n}$ as $\tilde{\sigma}$. Let $\{Q_{I}\}$ be the set of all the primitives (topologically inequivalent quadrangulations of an $n$-gon as defined in \cite{Raman:2019utu}.

 Given a $\tilde{\sigma}\, \in\, {\cal D}_{n}$ and a reference quadrangulation $Q_{I}$, let ${\cal M}_{n}(\tilde{\sigma}, Q_{I})$ be defined via the following equation. 
\begin{flalign}
\sum_{\tilde{\sigma}\, \in\, {\cal D}_{n}}\, (f_{\tilde{\sigma}}^{-1})^{\star}\, \cdot\, \Omega_{Q_{I}}^{\frac{n}{2}\, -\, 2}\vert_{f_{\tilde{\sigma}}(\, A_{n-3}^{Q_{I}\, \subset\, T_{I}}\, )}\, =:\, \sum_{\tilde{\sigma}\, \in\, {\cal D}_{n}}\, {\cal M}_{n}(\tilde{\sigma}, Q_{I})\, \bigwedge_{(1,m)\, \in\, Q_{I}}\, d s_{1\tilde{\sigma}(2)\, \dots\, \tilde{\sigma}(m-1)}
\end{flalign}
 Then the color-ordered $\phi^{4}$ amplitude, ${\cal M}_{n}^{\textrm{co}}(p_{1},\, \dots,\, p_{n})$ is given by the formula
\begin{flalign}\label{mncophi4}
{\cal M}_{n}^{\textrm{co}}(p_{1}\, \dots\, p_{n})\, =\, \sum_{I}\, \alpha_{I}\, \sum_{\tilde{\sigma}\, \in\, {\cal D}_{n}}\, {\cal M}_{n}(\tilde{\sigma},Q_{I})
\end{flalign}
where $\alpha_{I}$ are the weights associated to the primitives $\{\, \tilde{\sigma}\, \cdot\, Q_{I}\, \vert\, \tilde{\sigma}\, \in\, {\cal C}_{n}\, \}$. These weights were analyzed in \cite{Raman:2019utu, Kojima:2020tox, John:2020jww, Aneesh:2019ddi}. A general discussion on the structure of weight and the formula for computing them for a generic accordiohedron can be found in \cite{Jagadale:2021iab}.

Hence the color-ordered tree-level amplitude in massless $\phi^{4}$ theory can also be understood as a weighted sum  over ``deformed" realizations of the accordiohedra ${\cal AC}_{Q}$ where the deformations are simply linear maps from ${\cal E}_{n}$ to ${\cal K}_{n}$ and are parametrized by the dihedral group over $n$-elements. 
 \subsection{Towards S-matrix of $\phi^{4}$ theory without color}
 As the boundaries of the deformed realization of associahedron contain all the poles of a tree-level S matrix, it is rather tempting to speculate if the evaluation of the lower rank projective forms on these realizations will generate amplitudes of scalar field theories with generic non-derivative interactions. We now argue that this is indeed the case if we consider the projective $\frac{n-4}{2}$ forms parametrized by quadrangulations as it generates tree-level S-matrix of $\phi^{4}$ theory without color. 
 
We recall once again the notion of a primitive quadrangulation: A primitive is defined to be an equivalence class over those quadrangulations which can be mapped to each other by an element of $C_{n}$. For the purpose of this section it will be useful to introduce a more ``coarser" classification over the set of quadrangulations which we refer to as a labeled graph, $\Gamma_{q}$.
 
Consider a graph with four classes of nodes.\footnote{The labeled graphs defined here are isomorphic to the dissection quivers associated with quadrangulations, \cite{Padrol2019AssociahedraFF}. However, for the purpose of this discussion, the labeled graph representation is more appropriate.} We will label these 4 classes as  $\{$ r, b, v, and g$\}$. 
\begin{flalign}
\begin{array}{lll}
r\, \sim\, \textrm{a node with three external legs}\\
b\, \sim\, \textrm{a node with two external legs}\\
v\, \sim\, \textrm{a node with one external leg}\\
g\, \sim\, \textrm{a node with no external leg}
\end{array}
\end{flalign}
Every quadrangulation of an $n$-gon generates a labeled graph. However, this correspondence is many-to-one, and thus each $\Gamma_{q}$ defines an equivalence class of quadrangulations. The following lemma shows how $\Gamma_{q}$ generates a coarser classification scheme for quadrangulations as compared to the notion of primitives
\begin{lemma}
Consider an $n$-gon  whose vertices are labelled clockwise as $\{\, 1,\, \dots,\, n\, \}$. Let $Q_{1},\, Q_{2}$ be two distinct quadrangulations that belong to two distinct primitives but belong to the same equivalence class labeled by a labeled graph $\Gamma_{q}$. Then there exists a $\sigma_{1,2}\, \in\, S_{n}$ under whose action $Q_{1}$ is mapped to $Q_{2}$. 
That is, if $Q_{1}$ is the quadrangulation of $\{\, 1,\, \dots,\, n\, \}$, then keeping $Q_{1}$ fixed and permuting external vertices by $\sigma_{1,2}$ is equivalent to keeping the $n$-gon fixed and changing $Q_{1}$ to $Q_{2}$. 
\end{lemma}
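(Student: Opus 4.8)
The plan is to reduce the problem to a purely combinatorial statement about quadrangulations, proving it by explicitly constructing the permutation $\sigma_{1,2}$ from the common labeled graph $\Gamma_q$. First I would recall that a quadrangulation $Q$ of an $n$-gon is completely encoded by its dual structure: the cells (quadrilaterals) are the nodes, two cells sharing a chord are joined by an edge, and each node carries a number of ``free'' boundary edges of the $n$-gon (three, two, one, or zero), which is precisely the $\{r,b,v,g\}$-labeling. Since $Q_1$ and $Q_2$ have the \emph{same} labeled graph $\Gamma_q$, there is a graph isomorphism $\varphi$ between their dual trees that respects the node labels; in particular corresponding cells have the same number of incident $n$-gon edges. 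The content of the lemma is that such a $\varphi$ can always be realized by permuting the external vertices, i.e.\ by a cyclic-order-changing bijection of $\{1,\dots,n\}$.

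The key steps, in order, would be: (i) Fix a planar embedding of $Q_1$ and traverse its dual tree (say by a depth-first walk starting from a chosen leaf cell), recording at each cell the cyclic block of boundary edges of the $n$-gon attached there; this produces an ordered list of the $n$ edges of the $n$-gon, which after choosing a starting edge is just a relabeling $\{1,\dots,n\}$ read off from $Q_1$. (ii) Do the same traversal for $Q_2$ guided by the isomorphism $\varphi$: because $\varphi$ matches node labels, at each step the number of boundary edges to be inserted agrees, so the two traversals produce two orderings of the same set of $n$ edges. (iii) Define $\sigma_{1,2}$ to be the permutation of $\{1,\dots,n\}$ sending the edge-labels read from $Q_1$ to those read from $Q_2$; by construction this $\sigma_{1,2}$ carries every chord of $Q_1$ (a pair of edges of the $n$-gon bounding the same cell-adjacency cut) to the corresponding chord of $Q_2$, so $\sigma_{1,2}\cdot Q_1 = Q_2$. (iv) Note that $Q_1,Q_2$ lie in distinct primitives exactly when no \emph{cyclic} rotation achieves this, i.e.\ when $\sigma_{1,2}\notin C_n$, which is consistent with the hypothesis and shows the statement is non-vacuous. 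Finally I would restate the conclusion in the ``active'' form demanded: permuting the external vertices of $Q_1$ by $\sigma_{1,2}$ produces the dissection $Q_2$ of the fixed $n$-gon.

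The main obstacle I expect is step (ii)--(iii): one must check that the labeled graph $\Gamma_q$ really does carry enough information to reconstruct the quadrangulation up to such a permutation. A labeled tree with the $\{r,b,v,g\}$ weights does not in general determine a \emph{planar} embedding (a tree can have several inequivalent cyclic orderings of edges at a vertex), and different planar embeddings could correspond to genuinely different dissections that are \emph{not} related by any external-vertex permutation. The resolution is that we are allowed arbitrary $\sigma\in S_n$, not just cyclic or dihedral ones, so any reordering of the subtrees hanging off a node is itself realized by a further permutation of external labels; thus the planar-embedding ambiguity is absorbed into the freedom in choosing $\sigma_{1,2}$. I would make this precise by an induction on the number of cells: the base case is a single quadrilateral (four boundary edges, one $r$-node or the relevant small configuration), and the inductive step attaches one more quadrilateral along a chord, observing that the attachment data is fixed by the label of the new node up to a permutation of the newly exposed boundary edges. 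Care is needed to confirm that the edge-counting constraint $\sum(\text{labels}) = n$ together with the tree structure leaves no residual freedom beyond what a permutation in $S_n$ can match; once that bookkeeping is done the construction of $\sigma_{1,2}$ is routine.
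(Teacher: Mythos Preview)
Your proposal is correct and follows essentially the same approach as the paper: both construct $\sigma_{1,2}$ by matching, node by node in the common labeled graph $\Gamma_q$, the external vertices attached to each cell of $Q_1$ with those attached to the corresponding cell of $Q_2$. The paper's proof simply writes $\sigma_{1,2}$ as the composition $\sigma_{N_1}\circ\cdots\circ\sigma_{N_{(n-4)/2}}$ of node-wise permutations without further discussion, whereas you package the same idea via a tree traversal and are more explicit about the one genuine subtlety (the planar-embedding ambiguity at nodes of valence $\geq 3$) and its resolution (any reordering of subtrees is absorbed by the full $S_n$ freedom); your inductive framing is a cleaner way to make the chord-matching rigorous than the paper's terse composition.
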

\begin{proof}
Let $\{\, N_{1},\, \dots,\, N_{\frac{n-4}{2}}\, \}$ be the set of nodes of $\Gamma_{q}$ where each $N_{i}\, \in\, \{\, \textrm{r, b, v, g}\, \}$.  Let $N_{1}$ be labelled by external vertices $\{\, 1, 2,\, \dots,\, k\, \}\, \vert\, 0\, \leq\, k\, \leq\, 3$.  
Let the same node be labelled by external vertices $\{\, i_{1}, i_{2}, i_{k}\, \}$ in $Q_{2}$. Consider a permutation $\sigma_{N_{1}}$ that maps $ (2, \dots,\, k)$ to $(i_{2}, \dots,\, i_{k} )$ while keeping all the other vertices same.  Let $\, \{\, N_{1},\, \dots,\, N_{\frac{n-4}{2}}\, \}$ be the set of nodes of  the labelled graph $q$. Then the desired element of $S_{n}$ is $( \sigma_{N_{1}}\, \circ\, \sigma_{N_{2}}\, \circ\, \dots\, \circ\, \sigma_{N_{\frac{n-4}{2}}}\, )$. 
\end{proof} 
The family of labeled graphs parametrized by $n$ has the following properties.
\begin{enumerate}
\item For $n\, \leq\, 8$ there is a unique $\Gamma_{q}$ The $n = 8$ quiver is simply $\Gamma_{q} = \{\, r, b, r\, \}$.
\item If $n\, =\, 10$, there are two labelled graphs,
\begin{flalign}\label{cqn10}
\begin{array}{lll}
\Gamma_{q_{1}}\, =\, \{\, r, b, b, r\, \}\\
\Gamma_{q_{2}} = \{\, r, r, r, v\, \}.
\end{array}
\end{flalign}
Note that $Q_{1}\, =\, \{\, (1,4), (1,6), (1,8)\, \}$ for which ${\cal AC}_{Q_{1}}$ is a 3-dimensional associahedron belongs to $\Gamma_{q_{1}}$.
\item There are two vertices in $V({\cal AC}_{Q_{1}})$ that belongs to $\Gamma_{q_{2}}$ whereas the other 12 belong to $q_{1}$. That is for $n\, \leq\, 10$, the set of quadrangulations in $V({\cal AC}_{Q=\{\, (1,4), \dots, (1,n-2)\, \}})$ exhaust all the labelled graphs. \footnote{For $n\, =\, 6,8$ this can be checked readily. For $n=10$, we refer the reader to the appendix of  \cite{Banerjee:2018tun}.}  
\item If $n\, \geq\, 12$ then there is no quadrangulation $Q$ for which the set  $V({\cal AC}_{Q})$ spans over all the corresponding labelled graphs.\footnote{This can be argued as follows. For $n\, =\, 12$, we can verify this directly. Let the opposite be true if $n\, \geq\, 14$, but this leads to contradiction as there is no boundary of ${\cal AC}_{Q}$  which corresponds to 4 dimensional Stokes polytope and whose set of vertices exhaust all the labeled graphs in $n = 12$ case.}
\end{enumerate}
We will now try to apply the ideas of the previous section to relate  non-planar $n$ point scalar amplitudes in $\phi^{4}$ theory with projective lower forms $\Omega_{Q}$. Let, 
\begin{flalign}
Q\, =\, \{\, (1,4),\, (1,6),\, \dots,\, (1,n-2)\, \}
\end{flalign}
${\cal AC}_{Q}$ is an $\frac{n-4}{2}$ dimensional associahedron with $C_{\frac{n-2}{2}}$ number of vertices.  

Let us first define a $\sigma$-deformed Stokes polytope ${\cal AC }_{Q}^{\sigma}$ : Given a quadrangulation $Q$ of an $n$-gon with a clockwise ordering of vertices and the corresponding ${\cal AC}_{Q}$, ${\cal AC}_{Q}^{\sigma}$ is the Stokes polytope under the action of $\sigma$ on the vertices of the $n$-gon. 

From the comments made above, we can deduce the following surjection.
\begin{flalign}\label{overlinevq}
\overline{V}_{Q}\, :=\cup_{\sigma\, \in\, S_{n\, \leq\, 10}}& V({\cal AC}_{Q}^{\sigma})\nonumber\\
&\downarrow\nonumber\\
\textrm{Set of all quadrangulations of}& (\sigma(1),\, \dots,\, \sigma(n\, \leq\, 10))\textrm{-gon $\forall\, \sigma\, \in\, S_{n\, \leq\, 10}$.}
\end{flalign}
In the case of associahedron, we had the following map for a fixed triangulation $T$.
\begin{flalign}
\cup_{[\sigma]\, \in\, {\cal G}_{n}}& V((A_{n-3}^{T, \sigma}))\nonumber\\
&\downarrow\nonumber\\
\textrm{Set of all triangulations}&\, \textrm{with all possible ordering of vertices.}
\end{flalign}
In fact, as we showed, this map is not only surjective but $2^{n-3}\, \rightarrow\, 1$ because each  vertex (complete triangulation) occurs precisely $2^{n-3}$ times in $\cup_{[\sigma]\, \in\, {\cal G}_{n}}\, V(A_{n-3}^{T, \sigma})$. 

However, in the case of Stokes polytope, the combinatorics is far more intricate. 
\begin{itemize}
\item  $\overline{V}_{Q}$ does not exhaust all the channels of tree-level amplitude $n\, \geq\, 12$.  For example, in the case of $n\, =\, 12$, direct inspection shows that any channel that can be labeled by the following Mandelstam invariants
\begin{displaymath}
\{\, s_{abc}, s_{mnk}, s_{pqr}, s_{wxy}\,  \}\, \textrm{with}\, a\, \neq\, b\, \dots\, \neq\, y 
\end{displaymath}
can not be a vertex belonging to the set $\overline{V}_{Q}$ defined in equation (\ref{overlinevq}).\footnote{In $n\, =\, 12$ example, only those graphs whose topology generates channel of this type is not contained in $\overline{V}$. However as we go to higher $n$, any channel which corresponds to poles of the form $\{\, s_{i_{1}j_{1}k_{1}},\, \dots,\, s_{i_{K}j_{K}k_{K}}\, \vert\, K\, >\, \frac{n}{4}\, \}$ are not contained in $\overline{V}$.}  
\item As we will see, for $n\, \leq\, 10$ the map defined in eqn.(\ref{overlinevq}) is many-one for $Q = \{ (1,4), (1,6), \dots, (1,n-2)\, \}$. However, for $n = 10$, all the quadrangulations in the range of the map do not occur with equal multiplicity leading to a rather subtle connection with the 10-point amplitude in $\phi^{4}$ theory.
\end{itemize}
The first issue can be resolved as follows.  We can simply consider a set of quadrangulations, each one selected from one labeled graph, 
\begin{flalign}\label{overlineV}
\overline{V}\, =\, \cup_{Q^{\prime}\in\, \Gamma_{q_{1}}\, \cup\, \Gamma_{q_{2}}\, \cup\, \dots\, \cup\, \Gamma_{q_{M}}}\, \overline{V}_{Q^{\prime}}
\end{flalign}
where the sum is taken over a set of labeled graphs in the following way. We can start with 
\begin{flalign}
Q_{1}\, =\, \{\, (1,4),\, \dots,\, (1,n-2)\, \}
\end{flalign}
Suppose $( v_{1}, \dots, v_{M})\, \subset\, V({\cal AC}_{Q})$ do not belong to $\Gamma_{q_{1}}$ and let
\begin{flalign}\nonumber
v_{1}\, \in\, \Gamma_{q_{2}}
\end{flalign}
We then include $\Gamma_{q_{2}}$ in the union. Let $v_{l\, \geq\, 3}\, \notin V({\cal AC}_{Q_{2}})$. Then we include the labeled graph associated with $v_{l}$ in the union and so on till we include all elements of $(v_{1},\, \dots,\, v_{M})$. 

 As an example, consider $n\, =\, 12$. We can choose 
 \begin{flalign}\nonumber
 \overline{V} = V_{Q}\, \cup\, V_{Q^{\prime}}\ \textrm{with}
 \end{flalign}
\begin{flalign}
Q^{\prime}\, =\, \{\, (1,4), (4,7), (7,10), (10,1)\, \}
\end{flalign}
Systematic classification of the minimal set of labeled graphs is an intriguing combinatorics problem. Although it is beyond the scope of this paper. And finally, even if we generate such a vertex set $\overline{V}$, not all the quadrangulations (or, more precisely, labeled graphs) will occur with the same multiplicity. We will see how to solve this problem in the $n = 10$ case and briefly comment on it for general $n$ at the end of this section.\footnote{Essentially, we believe that a rather brute-force way to solve this problem is to consider a weighted sum of projective lower forms over all the labeled graphs as opposed to a minimal set of labeled graphs whose vertex set exhausts all possible channels of an $n$-pt. amplitude in $\phi^{4}$ theory. This belief is reflected in our final conjectured formula proposed in  eqn.\ref{cpfan}} 

We now focus on the three Stokes polytopes of dimensions $\leq\, 3$. By explicit computation, we show that the (manifestly crossing symmetric) $\phi^{4}$ amplitude is a sum over the push-forward of the projective $\frac{n-4}{2}$ forms $\Omega_{Q}$ evaluated on the family of deformed associahedra, $\{\, A_{n-3}^{T, \sigma}\, \vert\, \sigma\, \in\, S_{n}\, \}$.  

Following comment is in order. 
\begin{itemize}
\item Our analysis can also be interpreted  without taking recourse to associahedron and working solely with (convex realizations) of Stokes Polytopes. In other words, given a reference quadrangulation $Q$ and the corresponding combinatorial polytope ${\cal AC}_{Q}$, the convex realisation of ${\cal AC}_{Q}$ in ${\cal E}_{n}^{\geq\, 0}$ is obtained by solving the system of equations \cite{Padrol2019AssociahedraFF}
\begin{flalign}
s_{ij}\, =\, -\, c_{ij}\, \forall\, (i,j)\, \notin\, \{\, (2,n), (4,n), \dots, (n-3,n)\, \}
\end{flalign}
We can now consider deformed realizations of ${\cal AC}_{Q}$ in ${\cal K}_{n}$ by using $f_{\sigma}\, \vert\, \sigma\, \in\, S_{n}$ and as in the tri-valent case, analyze the sum over canonical forms associated to all the deformed realizations. The two approaches are equivalent.  As in \cite{Aneesh:2019cvt} however, we will only use ${\cal AC}_{Q}$ as a combinatorial polytope used to define planar scattering form in ${\cal E}_{n}$. This perspective places the ABHY associahedron at the heart of the landscape of tree-level  scalar amplitudes. 
\end{itemize}
\subsection{$\phi^{4}$ amplitudes for $n\, \in\, \{\, 6,\, 8\, \}$.}
We will first illustrate this result for Six and Eight pt. amplitudes before addressing the generic $n$ point S-matrix.

In the six-point case, the situation is rather straightforward. The Stokes polytope is the one-dimensional associahedron. As we sum over all the projective one-form $\Omega_{Q = (1,4)}$ which are push-forwarded onto $A_{3}^{T, \sigma}$ (where $T$ is any triangulation that contrains the chord $(1,4)$), we get $2\, \cdot\, 6!$ terms in all. It can now be readily checked that if we define ${\cal M}_{6}(\sigma)$ via the following formula.

Henceforth we denote the pull back of the $\frac{n-4}{2}$ d-$\ln$ form as, 
\begin{flalign}
(f_{\sigma}^{-1})^{\star}\, \circ\, \Omega_{Q}\, :=\, \Omega_{Q}^{\sigma}
\end{flalign}

\begin{flalign}
\frac{1}{2\, \cdot\, 4!}\, \sum_{\sigma\, \in\, S_{6}}\, \Omega^{\sigma}_{Q(T)}\vert_{(A_{3}^{\sigma, T})}\, =:\, \sum_{\sigma\, \in\, S_{6}}\, {\cal M}_{6}(\sigma)\, \bigwedge_{(i,j)\, \in\, Q(T)}\, \dd s_{\sigma(1)\sigma(2)\sigma(3)}.
\end{flalign}
Then 
\begin{flalign}
{\cal M}_{6}(p_{1},\, \dots,\, p_{6})\, =\, \sum_{S_{6}}\, {\cal M}_{6}(\sigma)
\end{flalign}
The normalisation factor $\frac{1}{2\, \cdot\, 4 !}$ is simply the multiplicity with which all the poles $\frac{1}{s_{ijk}}$ as we sum over all the permutations.   We note that 
\begin{flalign}
\frac{\vert\, \cup_{\sigma\, \in\, S_{8}}\, V({\cal AC}^{\sigma}_{Q(T)})\, \vert}{\textrm{Multiplicity}}\, =\, 10 
\end{flalign}
which is precisely the number of Feynman diagrams in 6-point amplitude. 

In the $n\, =\, 8$ case there are two primitives. One in which the two chords of a quadrangulation intersect in a common vertex and the other in which two chords are parallel. The resulting Stokes polytopes are a two-dimensional associahedron and a square, respectively. That is, for $Q_{1}\, =\, \{\, (1,4),\, (5,8)\, \}$, ${\cal AC}_{Q_{1}}$ is a square. On the other hand for  $Q_{2}\, =\, \{\, (1,4),\, (1,6)\, \}$, ${\cal AC}_{Q_{2}}$ is $A_{2}$. However $Q_{1}\, \sim\, Q_{2}$ under the action of $S_{8}$ and hence there is a unique labelled graph $q$ for $n=8$. 

Consider 
\begin{flalign}
\begin{array}{lll}
Q_{1}\, =\, (14,16)\\
Q_{2}\, =\, (14,58)
\end{array}
\end{flalign}
and let, 
\begin{flalign}
\sigma : (1,\, \dots,\, 8)\, \rightarrow\, (\, 1, 2, 3, 4, 8, 5, 6, 7\, ),
\end{flalign}
We see that $\sigma\, \cdot Q_{1}\, =\, Q_{2}$. 

One can first determine the multiplicity of any codimension three face (in the set of all the $5$ dimensional deformed associahedra) which corresponds to a pole in $\phi^{4}$ amplitude. As the action of $S_{8}$ on the set of all such faces is transitive we have,
\begin{flalign}
\textrm{Multiplicity of}\, (s_{\sigma(1),\, \dots,\, \sigma(3)},\, s_{\sigma(1),\, \dots\, \sigma(5)} )\, =\, 3 !\, \cdot\, 3 !\, \cdot 2
\end{flalign}
Thus the (normalised) push-forward of $\Omega_{Q}$ on the deformed realizations is given by, 
\begin{flalign}
{\cal M}_{8}(\sigma)\,  \wedge_{(i,j)\, \in\, Q = (14,16)}\, d s_{\sigma(i)\sigma(i+1)\, \dots\, \sigma(j-1)}\, :=\, \frac{1}{3 ! 3! 2}\, 
\Omega^{\sigma}_{Q=(1,4), (1,6)}\vert_{(A_{5}^{\sigma, T})}
\end{flalign}
\begin{flalign}
{\cal M}_{8}(p_{1},\, \dots,\, p_{8})\, =\, \sum_{\sigma\, \in\, S_{8}}\, {\cal M}_{8}(\sigma)
\end{flalign}
Once again, we see that,
\begin{flalign}
\frac{\vert\, \cup_{\sigma}\, V({\cal AC}^{\sigma}_{Q(T)})\, \vert}{3 ! 3 ! 2}\, =\, \frac{5\, \cdot 8!}{3!3!2}\, =\, 280\, =\, \vert\,  \textrm{Feynman diagrams}\, \vert
\end{flalign}
\subsection{Higher point amplitudes as forms.}
For $n\, =\, 10$, the situation appears to be far more intricate for the following reason. Let
\begin{flalign}
Q_{1}\, =\, \{\, (1,4), (1,6), (1,8)\, \}
\end{flalign}
Then $\cup_{\sigma\, \in\, S_{10}}\, V({\cal AC}^{\sigma}_{Q_{1}})$ contains  the entire set of quadrangulations of the $10$-gon with all possible ordering of the vertices. However, the multiplicity of various vertices under the action of all permutations is not the same.

Consider, two vertices $v_{1},\, v_{2}$ of the  ABHY realisation of $A_{7}^{T\, =\, \{ (1,3),\, \dots,\, (1,9)\, \}}$ on which the projective three-form $\Omega_{Q_{1}}$ has poles. 
\begin{flalign}
\begin{array}{lll}
v_{1}\, =\, \{\, (1,4), (1,6), (1,8)\, \}\, \rightarrow\, \{\, s_{123}, s_{12345}, s_{8,9,10}\, \}\\
v_{2}\, =\, \{\, (1,4), (4,7), (7,10)\, \}\, \rightarrow\, \{\, s_{123}, s_{456},\, s_{789}\, \}
\end{array}
\end{flalign}
As we show below, $v_{1}$ and $v_{2}$ do not occur with equal multiplicity  in $\cup_{\sigma\, \in\, S_{10}}\, V({\cal A}_{7}^{T, \sigma} ) )$. Thus all poles do not contribute equally in the sum over $\Omega_{Q_{1}}^{\sigma}\vert_{A_{n-3}^{T, \sigma}}$ and as a result, such a sum is not an amplitude of any theory.

That is, even though $\overline{V}_{Q_{1}}$ contains all the quadrangulations dual to all possible Feynman diagrams in $\phi^{4}$ theory, to ensure  an equal multiplicity of all the vertices there must exist a $Q_{2}$ such that (1) $\overline{V}_{Q_{2}}$ either contains vertices of type only $v_{2}$ or (2) if it contains vertices of type $v_{1}$ and $v_{2}$ then it contains $v_{2}$ vertices with more multiplicity then vertices of type $v_{1}$.

We choose the following quadrangulation to represent $q_{2}$.
\begin{flalign}
Q_{2}\, =\, \{\, (1,4), (4,7), (7,10)\, \}
\end{flalign}
The vertex poset of ${\cal AC}_{Q_{1}}, {\cal AC}_{Q_{2}}$  is given in the appendix of \cite{Banerjee:2018tun}. The poset structure is crucial to compute the multiplicity of any vertex in $\overline{V}_{Q_{i}}$ or equivalently any configuration of propagators $\{\, s_{i_{1} \dots i_{k_{1}}},\, s_{j_{1} \dots j_{k_{2}}},\, s_{l_{1} \dots l_{k_{3}}}\, \}$. 
\subsection{Computing Multiplicities}
We now compute the multiplicity of a vertex whose quadrangulation corresponds to labeled graph $\Gamma_{q_{1}}$ in the set  $\overline{V}_{Q_{1}},\, \overline{V}_{Q_{2}}$ respectively. 

(a) Consider first any configuration corresponding to a quadrangulation $Q_{1}^{\prime}\, \in\, q_{1}$. We claim that given any one of the 12 vertices, say $v_{1}$ in $V({\cal AC}_{Q_{1}})$  whose labeled graph is $\Gamma_{q_{1}}$,  there always exists at least one permutation $\sigma$ such that in the poset associated to $V({\cal AC}^{\sigma}_{Q_{1}})$, $v_{1}$ is mapped to the configuration corresponding to quadrangulation $Q_{1}^{\prime}$.  This follows simply from the fundamental property of the equivalence class $\Gamma_{q_{1}}$, all of whose elements can be mapped onto each other by at least one $\sigma\, \in\, S_{10}$. 

(b) Let the vertex corresponding to a quadrangulation $Q_{1}^{\prime}$ in ${\cal K}_{n}$ be the following.
\begin{flalign}
v_{Q_{1}^{\prime}}\, =\, \{\, s_{i_{1}\, \dots\, i_{3}},\, s_{j_{1} \dots j_{5}},\, s_{m_{1} \dots m_{3}}\, \}
\end{flalign}
There are $3!^{2} \times 4!^{2}$  permutations that keep $v_{Q_{1}^{\prime}}$ fixed.  

Using (a), (b), we see that in $\overline{V}_{Q_{1}}$, the multiplicity of any vertex whose quadrangulation $Q_{1}^{\prime}\, \in\, \Gamma_{q_{1}}$ is $(\, (3 !)^{2}\, \times\, 4^{2}\, )\, 12$. In the same spirit, we can deduce that, 
\begin{flalign}
\begin{array}{lll}
\textrm{Mult. of}\, Q_{2}^{\prime}\, \in\, \Gamma_{q_{2}}\, \textrm{in}\, \overline{V}_{Q_{1}}\, =\, 3!^{4}\, \times\, 2\\
\textrm{Mult. of}\, Q_{1}^{\prime}\, \in\, \Gamma_{q_{1}}\, \textrm{in}\, \overline{V}_{Q_{2}}\, =\, 3!^{2}\, \times\,  4^{2}\, \times\, 8\\
\textrm{Mult. of}\, Q_{2}^{\prime}\, \in\, \Gamma_{q_{2}}\, \textrm{in}\, \overline{V}_{Q_{2}}\, =\, 3!^{4}\, \times\, 4
\end{array}
\end{flalign}
Based on these computations of various multiplicities, consider the following weighted sum over projective 3-forms evaluated on the deformed associahedra. 

Let $T_{i}$ be any arbitrary triangulations that contain $Q_{i}$ for $i\, \in\, \{\, 1,2\, \}$.  Let ${\cal M}_{10}(\sigma, i)\vert_{i=1}^{2}$ be a rational function that is indexed by the set of labeled graphs and which is defined via the following formula. 
\begin{flalign}
\Omega^{\sigma}_{Q_{i}}\vert_{A_{7}^{T_{i}, \sigma}}\, =\, {\cal M}(\sigma, i)\, \wedge_{(i,j)\, \in\, Q_{i}}\, d s_{\sigma(i)\, \dots\, \sigma(j)}
\end{flalign}
Then by computing the multiplicity of any vertex which is of the type $q_{1}$ or $q_{2}$ we get, 
\begin{flalign}
\sum\, \alpha_{i}\, \sum_{\sigma\, \in\, S_{10}}\, {\cal M}(\sigma, i)\, =\, {\cal M}_{10}(p_{1},\, \dots,\, p_{10})
\end{flalign}
where 
\begin{flalign}
\begin{array}{lll}
\alpha_{1}\, =\, \frac{1}{3!^{2}\, \cdots\, 16\, \cdot\, 12 + 3!^{4}\, \cdot\, 2}\\
\alpha_{2}\, =\, \frac{1}{3!^{2}\, \cdots\, 16\, \cdot\, 8 + 3!^{4}\, \cdot\, 4}
\end{array}
\end{flalign}
As a curiosity, we note that $\frac{\alpha_{1}}{\alpha_{2}}\, =\, \frac{34}{33}$ which is rather close to $1$.  Note that the issue of unequal multiplicity is resolved in this case by considering a weighted sum of the forms over both the labeled graphs.

We now conjecture a formula for the tree-level $n$ point amplitude in $\phi^{4}$ theory.  We note that to the combinatorics complexity involved in computing the multiplicity of each vertex in $\overline{V}$ (defined in equation (\ref{overlineV})), proof of this formula is beyond the scope of this paper. 

Let  $\Gamma_{q_{1}}, \dots,\, \Gamma_{q_{k(n)}}$ be the set of \emph{all} labelled graphs with representatives $Q_{1},\, \dots,\, Q_{k(n)}$. 

Based on empirical observations made in $n\, =\, 6,\, 8,\, 10$ point case, we conjecture the following formula for the generic  $n$ point amplitude.  
\begin{tcolorbox}[colback=black!5!white,colframe=black!75!black]
There exists a set of rational numbers $\{\, \alpha_{1},\, \dots,\, \alpha_{k(n)}\, \}$ such that, 
\begin{flalign}\label{cpfan}
{\cal M}_{n}(p_{1},\, \dots,\, p_{n})\, =\, \sum_{i=1}^{k(n)}\, \alpha_{i}\, {\cal M}(\sigma, i)
\end{flalign}
\end{tcolorbox}

\section{Cluster Polytopes and Non-planar One loop Integrands: An obstruction}\label{1loop}
 After the seminal work by Salvatori, \cite{Salvatori:2019aa}  where the search for a positive geometry associated with one loop S-matrix of bi-adjoint $\phi^{3}$ theory was first initiated,  Arkani-Hamed, He, Salvatori and Thomas (AHST) discovered a convex realization of the $D_{n}$ cluster-polytope, \cite{Arkani-Hamed:2019vag}.  AHST realization was in  the positive quadrant of a space spanned by Mandelstam invariants generated by external momenta and the loop momentum $l^{\mu}$. We denote this space as ${\cal K}_{n}^{1-l}$. It contains, as a proper subspace, the vector space spanned by $\{\, X_{ij}, p_{i}\, \cdot\, l,\, l^{2}\, \}$, \cite{Arkani-Hamed:2019vag, Jagadale:2022rbl}. Perhaps the simplest way to understand the one loop kinematic space for $n$ particles is to start with a set of $2n$ external momenta $\{\, p_{1},\, \dots,\, p_{n},\, p_{\overline{1}},\, \dots,\, p_{\overline{n}}\, \}$ that satisfy, 
 \begin{flalign}
 \sum_{i=1}^{n}\, p_{i}\, +\, \sum_{i=1}^{n}\, p_{\overline{i}}\, =\, 0
 \end{flalign}
 The physical kinematic space can be understood as a subspace in which $p_{\overline{i}}$ is identified with $p_{i}$. The kinematic space in which cluster polytopes live is spanned by planar kinematic variables of the type, 
\begin{flalign}
{\cal K}_{n}^{1-L}\, =\, \textrm{span}\{\, X_{ij}, X_{i\overline{j}},\, X_{\overline{i}\overline{j}}, X_{i\overline{i}}\, Y_{i}, Y_{\overline{i}}\, \}
\end{flalign}
where
\begin{flalign}
\begin{array}{lll}
Y_{i}\, =\, (p_{1}\, +\, \dots\, +\, p_{i-1}\, +\, l)^{2}\\
Y_{\overline{i}}\, =\, (p_{1}\, +\, \dots\, +\, p_{n}\, +\, p_{\overline{1}}\, +\, \dots\, +\, p_{\overline{i-1}}\, +\, l)^{2}
\end{array}
\end{flalign}
The ``doubling of external momenta" fits in beautifully with the pseudo-triangulation model for $D_{n}$ cluster polytope, which was proposed by Ceballos and Pilaud in \cite{Ceballos2015ClusterAO}. In this model, every planar kinematic variable is a chord that can dissect a $2n$-gon with an annulus in the middle. The chords associated with $Y_{i}$ are non-linear and terminate at a boundary point $0_{R}$ of the annulus, whereas the chords associated to $Y_{\overline{i}}$ terminate at the antipodal point $0_{L}$ of the annulus.  As shown in the figure (\ref{example}), the chords $X_{i\overline{i}}$ are non-linear and enclose the annulus such that any (pseudo)-triangulation of a 2n-gon which includes a chord $X_{i\overline{i}}$ must include $Y_{i}, Y_{\overline{i}}$. 
\begin{figure}
     \centering
     \includegraphics[scale=0.20]{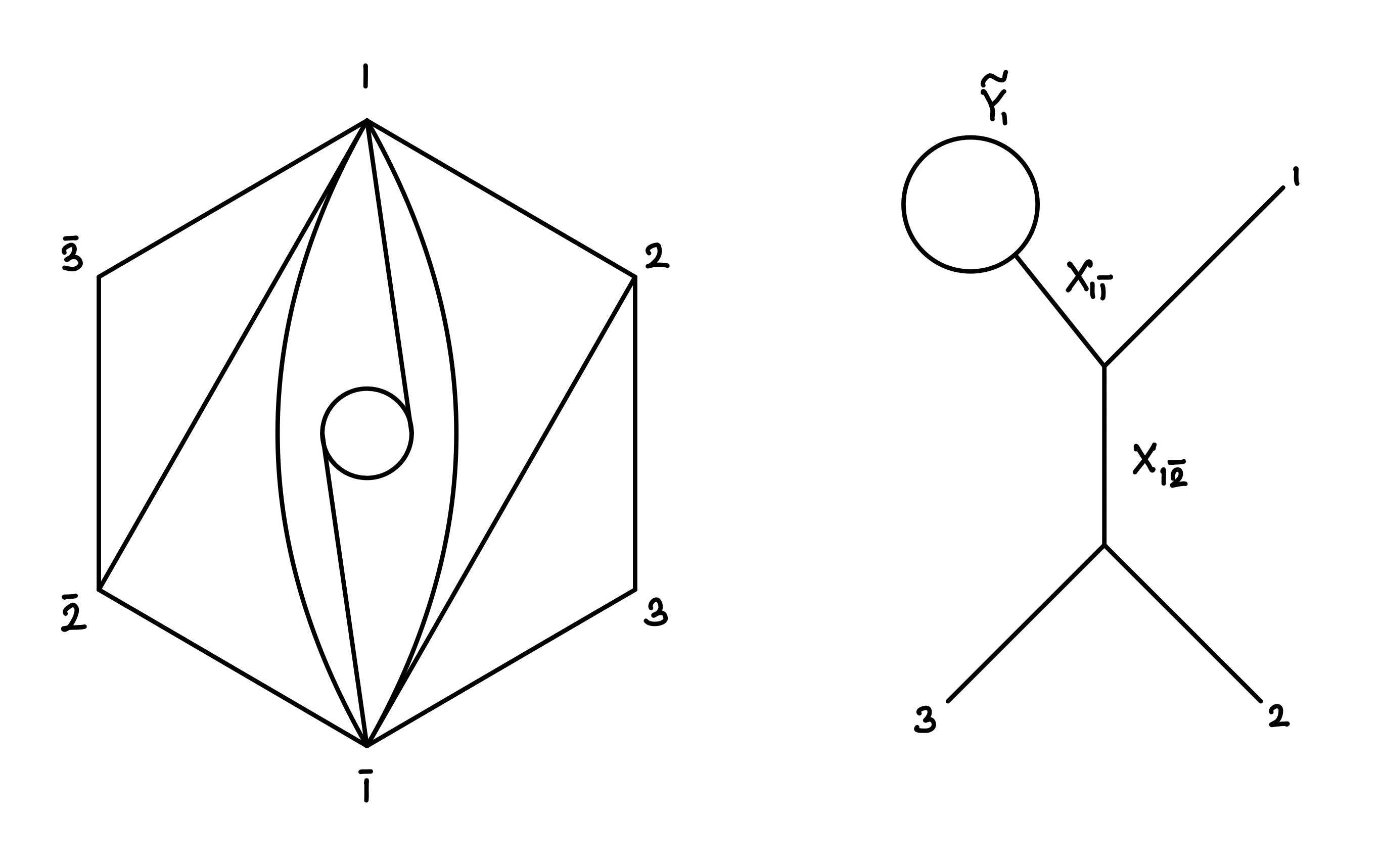}
     \caption{A closed polytope with all three channels}
     \label{example}
 \end{figure}
We have given a very brief summary of the construction of ${\cal K}_{n}^{1-l}$ and urge the reader to consult \cite{Jagadale:2022rbl} for a more elaborate discussion. 

The state of the art in the subject is the discovery of a new class of combinatorial polytopes called $\hat{D}_{n}$ polytopes  discovered by Arkani-Hamed,  Frost, Plamondon, Salvatori, and Thomas in \cite{afpst}. Boundaries of the convex realization of $\hat{D}_{n}$ correspond to \emph{all} the pseudo-triangulations of the $2n$-gon.\footnote{A brief review of $\hat{D}_{n}$ polytope and its AHST inspired realisation can be found in \cite{Jagadale:2022rbl}. We are indebted to Nima Arkani-Hamed for patiently explaining their construction to us.} These polytopes sit inside positive quadrant of ${\cal K}_{n}^{1-l}$.  

Although a detailed derivation of the convex realization is not essential for us (and can be found in\cite{Jagadale:2022rbl}), the basic idea is rather simple. In a nut-shell the convex realisation of $\hat{D}_{n}$ polytopes mirror the convex realisation of the associahedron $A_{n-3}^{T}$ in ${\cal K}_{n}$.

Given a reference (pseudo)-triangulation $PT_{0}$ of the holed $2n$-gon, we consider all the chords which belong to $PT_{0}^{c}$ which is a pseudo-triangulation obtained by a counter-clockwise $\frac{\pi}{n}$ rotation of $PT_{0}$ and write all the linear equations of the form
\begin{flalign}
Y_{IJ}\, =\, -\, c_{IJ}\, \hspace{0.1cm} \forall\,  \hspace{0.1cm} (I,J)\, \notin\, PT_{0}^{c}
\end{flalign}
where $(I,J)$ indicates all possible linear as well as non-linear chords and 
\begin{displaymath}
X_{IJ}\, =\, \{X_{ij}, X_{i\overline{j}}, X_{i\overline{i}}, Y_{i}, \tilde{Y}_{i}\, \}
\end{displaymath}
are the corresponding co-ordinates in the  ${\cal K}_{n}^{1-l}$. The projective $d\ln$ form on ${\cal K}_{n}^{1-l}$ uniquely determined by $\hat{D}_{n}$ generates one loop integrand of bi-adjoint $\phi^{3}$ amplitude, 
\begin{flalign}
m_{n}^{1-l}(p_{1}, \dots, p_{n}, l)\, \bigwedge_{(I,J)\, \in\, PT_{0}}\, \dd X_{IJ}\, =\, \Omega_{n}^{PT_{0}}\vert_{\hat{D}_{n}^{PT_{0}}}
\end{flalign}
It is tempting to consider the deformed realizations of the $\hat{D}_{n}$ polytopes where the deformation is parametrized by $[\sigma]\, \in\, {\cal G}_{n}$ (or more broadly, $\sigma\, \in\, S_{n}$) and see if the sum over all corresponding canonical forms is S-matrix integrand (at one loop) for $\phi^{3}$ theory without color. However, a moment of meditation informs us that a naive application of such an idea can not work. This can be understood by a simple counting argument for $n\, =\, 4$.  

The cardinality of the vertex set of $\hat{D}_{4}$ is 70, and this can be seen as follows. As each vertex of $\hat{D}_{n}$ corresponds to a unique pseudo-triangulation of the holed $2n$-gon, we need to simply count the number of pseudo-triangulations. However, just as for triangulation, each pseudo-triangulation is dual to a planar  one-loop Feynman graph with cubic vertices, such that the loop momentum is oriented clockwise or counter-clockwise. For a fixed orientation of the loop momentum, the number of such graphs is twenty graphs with tadpoles, ten graphs with a bubble, four graphs associated with vertex corrections, and one box graph. Hence,
\begin{flalign}
\vert \textrm{1-loop planar oriented Feynman graph}\, \vert\ =\, \vert\, V(\hat{D}_{n})\, \vert\, =\, 70
\end{flalign}
From the perspective of amplitudes, the map between a set of all vertices of $\hat{D}_{n}$ polytope with the set of planar 1-loop Feynman graphs is $2: 1$. In other words two vertices $v, v^{\prime}\, \in\, V(\hat{D}_{n})$ are equivalent if 
\begin{flalign}
\forall\, (i,j)\, \in\, v\, \exists\, (\overline{i}, \overline{j})\, \in\, v^{\prime}\, \textrm{or}\, \forall\, (i,0_{R})\, \in\, v\, \exists\, (i,0_{L})\, \in\, v^{\prime}
\end{flalign}
Modulo this equivalence, we  see that the number of ``independent" vertices in $\frac{\hat{D}_{n}}{Z_{2}}$ is 35. 
\begin{flalign}
\vert\, \cup_{[\sigma]\, \in\, {\cal G}_{4}}\, V(\hat{D}_{n}^{\sigma})\, \vert\, =\, 105
\end{flalign}
On the other hand, the total number of one loop 4-point Feynman graphs with cubic vertices is 54.\footnote{Once again, this can be verified by a simple counting argument.} We can contrast this situation with associahedron where
\begin{flalign}
\vert\, \cup_{[\sigma]\, \in\, {\cal G}_{n}}\, V( A_{n-3}^{T, \sigma})\, \vert\, =\, 2^{n-3}\, \vert\, \textrm{Feynman graphs}\, \vert
\end{flalign}
Hence we see that the cardinality of the complete vertex set spanned by ${\cal G}_{4}$ action on AHST realization of $\hat{D}_{4}$ is not an integer multiple of the total number of  Feynman graphs $\phi^{3}$ theory. We thus conclude that if we define the $[\sigma]$ dependent rational function on ${\cal K}_{4}^{1-l}$ via,
\begin{flalign}
\Omega_{4}^{PT_{0}, \sigma}\vert_{\hat{D}_{4}^{PT_{0},\sigma}}\, \rightarrow\, {\cal M}_{4}^{1-l}(\sigma)
\end{flalign}
then there exists no $\alpha_{4}\, \in\, {\bf Z}^{+}$ for which 
\begin{flalign}
\sum_{[\sigma]\, \in\, {\cal G}_{4}}\, {\cal M}_{4}(\sigma)\, =\, \alpha_{4}\, m_{4}^{1-l}(p_{1}, \dots,\, p_{4}).
\end{flalign}
It can be easily argued that this result continues to hold $\forall\, n$. 

Thus there is an obstruction in realizing $m_{n}^{1-l}$ in scalar theory without color as a (sum of) canonical forms of $\hat{D}_{n}$ polytopes. In fact, in the $n = 4$ case, we could have foreseen this already. While every vertex of $\hat{D}_{4}$ which is not dual to box diagram occurs twice in the set of all vertices, the graphs with no tree-level pole (that is, box graphs for three orderings $\{\, (1,2,3,4),\, (1,3,2,4),\, (1,2,4,3)\, \}$) occur only once. In fact, this should be expected as the set of vertices of $\hat{D}_{4}$ not only correspond to the color ordering of external states but also planar loops.  However, in the perturbative expansion of the S-matrix, once the external states are not colored, then even for a fixed ordering, we can have the box as well as cross-box graphs. Such non-planar diagrams do not correspond to any vertex of the $\hat{D}_{n}$. 
Several comments are in order.
\begin{itemize}
\item It can be verified that there is no weighted sum over canonical forms associated with $\hat{D}_{4}$ and the deformed realization of the 4-dimensional cyclohedron $\hat{C}_{4}$ which is proportional to the one loop integrand. This is because every vertex of a cyclohedron is dual to a tadpole graph with a cubic vertex. Hence although the ratio of residues between the tadpole graphs and the box graphs can be adjusted by suitably changing relative weights of $\hat{D}_{4}$, $\hat{C}_{4}$ forms, this does not change the residue of a vertex associated to one loop propagator.  
\item We believe that more general Clusterohedra discovered in \cite{afpst}, whose boundary poset includes planar as well as non-planar poles of the one loop integrand in bi-adjoint theory will play a crucial role in the hunt for positive geometry of amplitudes without color. 
\end{itemize}

\section{Conclusions and Open questions}
In this note, we have continued to develop the ideas proposed in \cite{Jagadale:2021iab, Jagadale:2022rbl}. The central premise behind these ideas really should be thought of morally as a ``bootstrap" construction in the context of the positive geometry program of the S-matrix. Namely, under what conditions do  convex realizations of positive geometries (whose boundary poset is isomorphic to a set of poles of an S-matrix) geometrize the S-matrix of a local unitary QFT? Although for an arbitrary diffeomorphism, the answer is not known, we have shown that for an infinite family of linear diffeomorphisms, the resulting realizations through their canonical forms always define a tree-level S matrix.\footnote{In fact as we proved in \cite{Jagadale:2022rbl}, there is a class of diffeomorphisms which deform the $\hat{D}$ polytope in such a way that the resulting realisation constitute a positive geometry for one loop integrand of a scalar field theory in which two scalars with unequal masses interact via a cubic coupling.}  In this paper we analysed a rather  canonical choice of diffeomorphisms that arise from the combinatorial Bose symmetry acting on the configuration space of $n$ momenta. The resulting associahedra collective constitutes a positive geometry for tree-level S matrix without color. 

In a beautiful paper \cite{Damgaard:2021ab}, the authors have shown how  the Kleiss-Kuijf relations emerge from the geometry of the positive geometries, such as the momentum amplituhedron and the associahedron. One of the corollaries of their analysis is to obtain any channel (or a collection of channels) of the $\phi^{3}$ amplitude as the boundary of an open associahedron in ${\cal K}_{n}$. All of the associahedra lie in the fixed linearity space spanned by $\{\, X_{i_{1}j_{1}},\, \dots,\, X_{i_{n-3},j_{n-3}}\, \}$ in ${\cal K}_{n}$. The S-matrix of uncolored $\phi^{3}$ theory can then be obtained as an oriented sum over positive geometries (for a precise definition of the oriented sum over positive geometries, see \cite{Damgaard:2021ab}, \cite{Dian:2022tpf}). However, the positive geometries that will generate the S-matrix are distinct from the deformed realizations we have found in our paper.  However it will be interesting to compare the two approaches and see if the idea proposed in their work can be applied to obtain the S-matrix of $\phi^{p}$ theory from positive geometries. 

 The central result of this paper is a rather simple formula for the S-matrix, which was proved using a remarkable combinatorial formula relating the Catalan number, the number of tri-valent graphs with $n$ external vertices, and the cardinality of the symmetric group. We then argued that restriction of $\Omega_{n}^{Q}$ on the ABHY associahedron generates the S-matrix of $\phi^{4}$ theory without color. Although a complete formula expressing $n$ point amplitude of $\phi^{4}$ theory (as a weighted sum of lower forms pulled back onto the deformed associahedra) is beyond the scope of this paper, we believe that the conceptual setup basically will go through for all accordiohedra and hence any $\phi^{p}$ theory. 
 
The results in this paper, along with those in \cite{Jagadale:2021iab, Jagadale:2022rbl} are data points that reveal the striking universality of ABHY associahedron as a geometric structure associated with the S-matrix. While ABHY realization gives a very specific shape to associahedron in the embedding space, it corresponds to an infinite family of realizations all of which are diffeomorphic to the ABHY associahedron and a large family of these are related to increasingly complicated quantum field theories which are unitarily inequivalent. 
 
There are many potential avenues to develop these ideas further including the hunt for positive geometry for the one loop integrands of $\phi^{ 3}$ S-matrix without color. 

We finally close this note with a rather trivial observation, but which hints at the relevance of positive geometries arising via gluing of associahedra. 
\subsection*{Can all the Deformations be Combined into a Single Geometry?}
In the n =4 case, we can take the mirror images of the three associahedra and join them at the respective vertices to obtain a closed one-dimensional hexagon in which the adjacent edges have the opposite orientation.  (see figure \ref{figurex}). 

\begin{figure}
     \centering
     \includegraphics[scale=0.20]{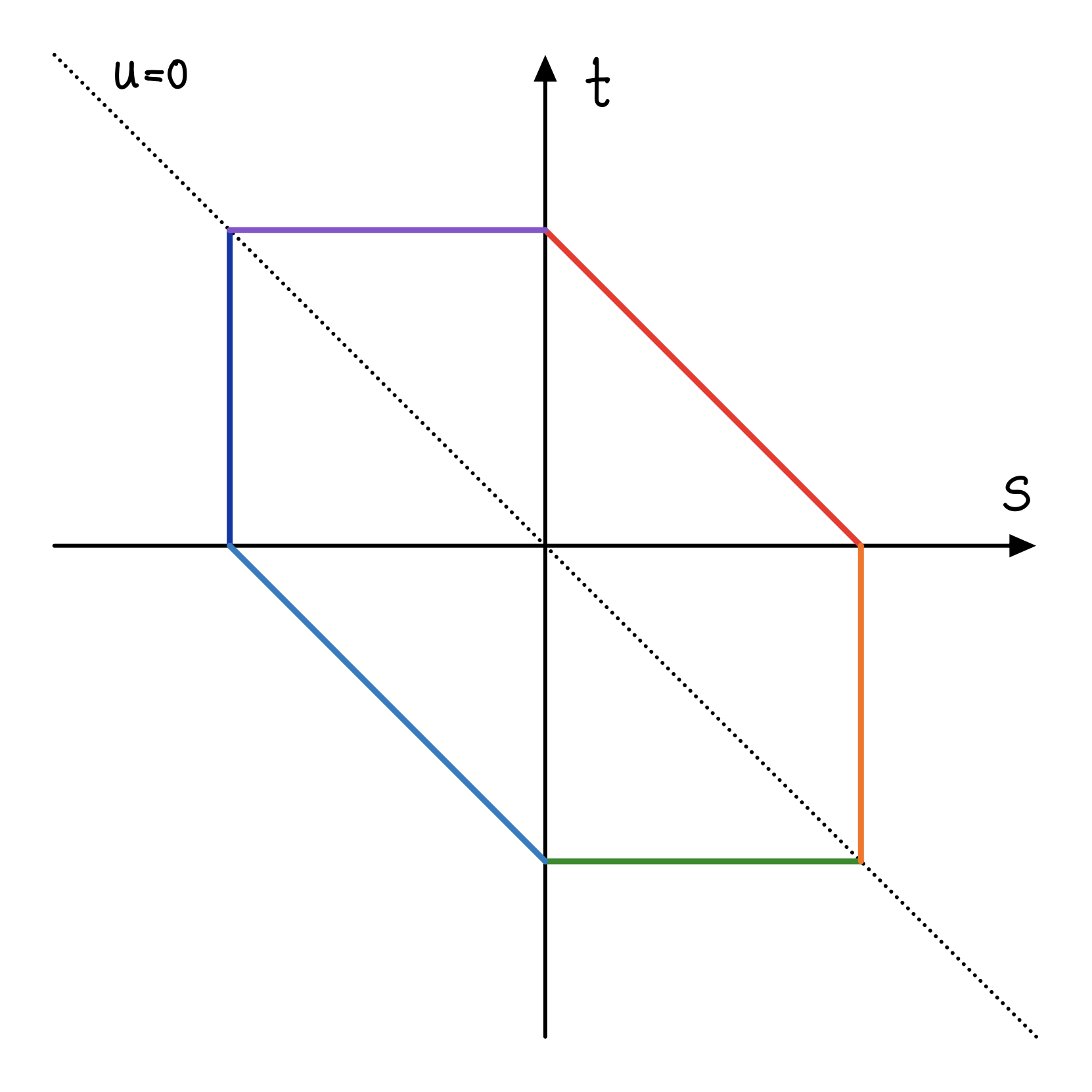}
     \caption{A closed polytope with all three channels}
     \label{figurex}
 \end{figure}
The resulting canonical form then has simple poles at all the six vertices, and the corresponding one form is $4\, (\frac{1}{s} + \frac{1}{t} + \frac{1}{u} )$. It will be interesting to investigate if such a structure persists at higher $n$.  
\section*{Acknolwedgements}
We would like to thank Pinaki Banerjee, Nikhil Kalyanapuram, Subramanya Hegde, Arkajyoti Manna, Prashanth Raman, Aninda Sinha and Ashoke Sen for discussions. We are especially thankful to Nima Arkani-Hamed for stimulating discussions in the initial phase of this project and his constant support over the years. AL would like to thank Center for High Energy Physics (CHEP) at the Indian Institute of Science, International Center for Theoretical Physics (ICTS) and Department of Theoretical Physics (DTP) at Tata Institute of Fundamental Research (TIFR), Mumbai  for their hospitality at various stages of this project. MJ is supported by the Walter Burke Institute for Theoretical Physics, the U.S. Department of Energy, the Office of Science, Office of High Energy Physics under Award No. DE-SC0011632.

\bibliographystyle{unsrt}
\bibliography{Bibliography}

\end{document}